\documentclass[11pt]{amsart}
\usepackage{amssymb,mathtools,bm,graphicx}
\usepackage{booktabs}
\usepackage{enumitem}
\usepackage{pgfplots}
\usepackage{pgfplotstable}
\pgfplotsset{compat=1.18}
\pgfplotsset{
  every axis/.append style={
    legend columns=-1,
    legend cell align={center},
    legend style={at={(0.5,1.16)}, anchor=south, font=\scriptsize, draw=none, fill=none, /tikz/every even column/.append style={column sep=0.45em}},
    label style={font=\small},
    tick label style={font=\scriptsize}
  }
}
\usepgfplotslibrary{groupplots}
\renewcommand{\AA}{\bm A}
\newcommand{\BB}{\bm B}
\newcommand{\GG}{\bm G}
\newcommand{\HH}{\bm H}
\newcommand{\Id}{\bm I}
\newcommand{\QQ}{\mathcal Q}
\newcommand{\RR}{\mathcal R}
\newcommand{\EE}{\mathcal E}
\newcommand{\calD}{\mathcal D}
\newcommand{\calK}{\mathcal K}
\newcommand{\calO}{\mathcal O}
\newcommand{\uu}{\bm u}
\newcommand{\Wi}{\mathrm{Wi}}
\newcommand{\dd}{\,\mathrm d}
\newcommand{\dt}{\Delta t}
\newcommand{\SPD}{\mathbb S_{++}}
\newcommand{\tr}{\operatorname{tr}}

\newcommand{\Log}{\operatorname{Log}}
\newcommand{\Exp}{\operatorname{Exp}}
\newcommand{\divv}{\nabla\!\cdot}
\newcommand{\grad}{\nabla}
\newcommand{\norm}[1]{\left\|#1\right\|}

\makeatletter
\newcounter{manualresult}[section]
\renewcommand{\themanualresult}{\thesection.\arabic{manualresult}}
\newcommand{\manualresulthead}[2]{%
  \par\medskip\stepcounter{manualresult}%
  \protected@edef\@currentlabel{\themanualresult}%
  \noindent{\normalfont\scshape #1~\themanualresult.} #2\ }
\newenvironment{theorem}{\manualresulthead{Theorem}{\itshape}}{\par\medskip}
\newenvironment{lemma}{\manualresulthead{Lemma}{\itshape}}{\par\medskip}
\newenvironment{proposition}{\manualresulthead{Proposition}{\itshape}}{\par\medskip}
\newenvironment{definition}{\manualresulthead{Definition}{\normalfont}}{\par\medskip}
\renewenvironment{proof}{\par\medskip\noindent{\itshape Proof.}\ }{\hfill$\square$\par\medskip}
\makeatother

\title[Entropy-Compatible Reconstruction]{Entropy-Compatible Reconstruction for High-Weissenberg Viscoelastic Flow}
\author{Sai Peng}
\address{School of Mathematics and Computational Science, Xiangtan University, Xiangtan, Hunan, China}
\email{pscfd@xtu.edu.cn}

\begin{document}

\begin{abstract}
Log-conformation and square-root reconstructions preserve positive definiteness in high-Weissenberg viscoelastic simulations, but positivity alone does not guarantee compatibility with the discrete free-energy balance.  We identify three reconstruction-level mechanisms by which strictly positive tensors can still generate nonphysical behavior: Jensen-type entropy bias, exponential amplification of logarithmic perturbations in highly stretched states, and sign-indefinite polymeric-work defects caused by using incompatible tensors in stress work and entropy variables.  We formulate an entropy-compatible reconstruction principle and a corrected logarithmic reconstruction selected by a least-damping entropy constraint.  The correction is local, positive, computable by bisection, spectrally controlled, and compatible with coupled velocity--pressure--conformation time stepping.  We prove existence of the maximal admissible parameter, convexity of the entropy profile along the logarithmic path, a compatible free-energy estimate, a defect-budget estimate for noncompatible reconstructions, asymptotic inactivity on high-order admissible defects, and a conditional high-stretch resolution advantage in log-relative and entropy metrics.  Reproducible diagnostics compare logarithmic, square-root, and linear reconstructions and verify the predicted entropy defects, work defects, stress-force errors, and high-Weissenberg accumulation.
\end{abstract}
\maketitle

\par\medskip\noindent\textbf{Keywords.} 
Oldroyd--B, log-conformation, square-root conformation, high Weissenberg number, entropy stability, positive definite reconstruction
\par\medskip

\par\medskip\noindent\textbf{MSC 2020.} 
65M08, 65M12, 76A10, 76M20
\par\medskip

\section{Introduction}

The high-Weissenberg number problem remains a central obstacle in viscoelastic flow simulation \cite{BirdArmstrongHassager1987,Keunings1986}.  In conformation-tensor models the polymeric state is a symmetric positive definite tensor $\AA$.  Positive definiteness is indispensable: without it the elastic free energy, $\log\det\AA$, and the matrix logarithm are not meaningful.  Log-conformation methods enforce this constraint by writing
\[
  \AA=\Exp\Psi,\qquad \Psi=\Log\AA,
\]
and square-root methods enforce it by writing $\AA=\BB\BB^T$.  These are powerful and widely used ideas \cite{FattalKupferman2004,FattalKupferman2005,HulsenFattalKupferman2005,BalciThomasesRenardy2011}.  However, numerical experience shows that high-Weissenberg computations may still develop nonphysical stress overshoots or elastic-energy growth even when $\AA$ remains positive definite.

The reason is that positivity is a cone constraint, while free-energy compatibility is a coupled balance constraint.  The elastic entropy
\[
  \Phi(\AA)=\tr\AA-\log\det\AA-d
\]
is nonlinear in both $\AA$ and $\Psi$.  In addition, the continuous free-energy law relies on exact cancellation between the polymeric work in the momentum equation and the stretching work in the conformation equation.  If these terms use different reconstructed tensors, a sign-indefinite work defect remains.

This paper studies that reconstruction-level issue.  We show that a positive high-order reconstruction can be nonphysical in three ways: it can introduce nonlinear entropy bias, amplify logarithmic perturbations through the matrix exponential, or leave an uncancelled coupling work defect.  We then define an entropy-compatible corrected logarithmic reconstruction.  The correction accepts the raw high-order logarithmic reconstruction if it satisfies a local entropy budget; otherwise it moves back along the logarithmic segment from a physical predictor toward the raw reconstruction and chooses the largest admissible parameter.

The contribution is not another positivity parametrization.  Rather, it is an admissibility principle for positive reconstructions before they enter the coupled stress work.  The corrected reconstruction is least damping, local or cellwise, positive by construction, bisection computable, and asymptotically inactive on sufficiently resolved high-order defects.

\subsection{Related work and positioning}

The high-Weissenberg number problem has been recognized for decades as a central obstruction in viscoelastic computation \cite{Keunings1986}.  Log-conformation methods \cite{FattalKupferman2004,FattalKupferman2005,HulsenFattalKupferman2005} and square-root or factorized methods \cite{BalciThomasesRenardy2011} address the most visible part of the problem: loss of positive definiteness.  Free-energy-dissipative methods address a complementary part: compatibility with the thermodynamic balance \cite{BoyavalLelievreMangoubi2009}.

The point of this paper is the gap between these two ideas.  A reconstruction can be strictly positive and still alter the entropy or the stress-work cancellation before the time integrator sees the state.  This is particularly damaging at high Weissenberg number, where weak relaxation allows small reconstruction defects to persist and accumulate.  The corrected log reconstruction proposed here is therefore not a replacement for log-conformation methods; it is an entropy-compatibility layer placed on top of a positive reconstruction.

The main results are used in the following order.  First we identify reconstruction-level defects: Jensen entropy bias, exponential amplification, and sign-indefinite work defects.  Then we define a logarithmic path from a physical predictor to a raw high-order reconstruction and choose the largest point satisfying an entropy budget.  Finally we prove that this correction is positive, spectrally controlled, least damping on the path, compatible with a coupled energy estimate, and asymptotically inactive when the raw high-order defect is already entropy-admissible.

\begin{table}[htbp]
\centering
\caption{Main claims, analytical mechanisms, and numerical evidence.}
\label{tab:claim-evidence-log}
\small
\begin{tabular}{p{0.25\linewidth}p{0.36\linewidth}p{0.29\linewidth}}
\toprule
Claim & Analytical mechanism & Main evidence \\
\midrule
Positivity is not enough & nonlinear maps and mismatched work tensors can add entropy or leave sign-indefinite defects & scalar bias, square-root comparison, work-defect table \\
Log defects amplify at high stretch & the Frechet derivative of $\Exp$ scales with $\lambda_{\max}(\AA)$ & matrix amplification and coupled stress-force diagnostics \\
The correction is least damping & choose the largest entropy-admissible point on the logarithmic segment & bisection construction and entropy-correction diagnostic \\
Coupled energy is controlled & use the accepted tensor consistently in stress work, stretching, entropy variables, and quadrature & discrete energy estimate and dynamic high-$\Wi$ loop \\
High-order behavior is retained & mesh-scaled budgets make compatible high-order defects inactive or asymptotically small & resolution and stretch sweeps \\
\bottomrule
\end{tabular}
\end{table}

The paper isolates the reconstruction layer and its interaction with a coupled velocity--pressure--conformation step.  Optimal nonlinear preconditioners, adaptive stress-layer resolution, and full three-dimensional benchmark suites are important solver questions, but the mathematical issue addressed here is sharper: which positive reconstructed tensor may enter the coupled entropy and stress-work balance.

For this reason, the paper focuses on the entropy budget, the coupled work cancellation, the high-order inactivity estimate, and the resolution mechanism.

\section{Model entropy and positive reconstructions}

We use the conformation form of an incompressible viscoelastic model,
\begin{align}
  \partial_t\uu+\uu\cdot\grad\uu-\beta\Delta\uu+\grad p
  &=\frac{1-\beta}{\Wi}\divv(\AA-\Id),                         \label{eq:mom}\\
  \divv\uu&=0,                                                   \label{eq:div}\\
  \partial_t\AA+\uu\cdot\grad\AA
  -(\grad\uu)\AA-\AA(\grad\uu)^T
  &=-\frac1{\Wi}(\AA-\Id)+\varepsilon\Delta\AA .                \label{eq:conf}
\end{align}
The entropy variable is
\[
  \HH(\AA)=D\Phi(\AA)=\Id-\AA^{-1}.
\]
Testing the momentum equation by $\uu$ and the conformation equation by $\HH(\AA)$ gives the coupling cancellation
\[
  \frac{1-\beta}{\Wi}\int_\Omega(\AA-\Id):\grad\uu\,\dd x
  -
  \frac{1-\beta}{2\Wi}\int_\Omega
  \big((\grad\uu)\AA+\AA(\grad\uu)^T\big):(\Id-\AA^{-1})\,\dd x
  =0 .
\]
Thus a reconstruction must not only produce a positive tensor; the same accepted tensor must be used in stress work, stretching work, entropy variables, and entropy quadrature.

Let $\RR_h$ be a high-order reconstruction to quadrature points.  Three representative choices are
\[
  \AA_q=\RR_h\AA,\qquad
  \AA_q=\Exp(\RR_h\Log\AA),\qquad
  \AA_q=(\RR_h\BB)(\RR_h\BB)^T .
\]
The last two enforce positivity.  The next section explains why this is not enough.

\subsection{Relative entropy on spectral sets}

For $\AA,\BB\in\SPD$ define the relative matrix entropy
\[
  \Phi(\AA\mid\BB)
  =
  \Phi(\AA)-\Phi(\BB)
  -(\Id-\BB^{-1}):(\AA-\BB).
\]
It is nonnegative by convexity and vanishes only when $\AA=\BB$.  On a compact spectral set
\[
  \calK_{m,M}=\{\AA\in\SPD:\;m\Id\le\AA\le M\Id\},
\]
relative entropy is equivalent to the squared logarithmic distance:
\[
  c_{m,M}\norm{\Log\AA-\Log\BB}_F^2
  \le
  \Phi(\AA\mid\BB)
  \le
  C_{m,M}\norm{\Log\AA-\Log\BB}_F^2 .
\]
Indeed, the logarithm is a smooth diffeomorphism on $\calK_{m,M}$, and the Hessian of $\Phi$ is positive definite there:
\[
  D^2\Phi(\AA)[E,E]=\tr(\AA^{-1}E\AA^{-1}E).
\]
This observation is the metric bridge used later: the correction is defined by entropy, but its accuracy can be measured in logarithmic variables on resolved spectral sets.

\begin{proposition}
At a quadrature point let $\GG$ be a trace-free discrete velocity gradient.  If the momentum equation uses $\AA_m$ in the polymeric stress work while the conformation entropy equation uses $\AA_e$ in the stretching term, then the uncancelled coupling contribution is
\[
  \frac{1-\beta}{\Wi}(\AA_m-\AA_e):\GG .
\]
Exact cancellation for all trace-free $\GG$ holds when the deviatoric parts agree, in particular when $\AA_m=\AA_e$.
\end{proposition}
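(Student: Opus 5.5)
We compute the two coupling terms pointwise at the quadrature point and identify exactly what remains after they are added. This mirrors the continuous cancellation displayed before the statement, except that the momentum side now uses $\AA_m$ and the entropy side uses $\AA_e$.

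\emph{Momentum work.} Testing the momentum equation by $\uu$ contributes $\frac{1-\beta}{\Wi}(\AA_m-\Id):\GG$ from the polymeric stress. Since $\GG$ is trace-free, $\Id:\GG=\tr\GG=0$, so this term reduces to $\frac{1-\beta}{\Wi}\AA_m:\GG$.

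\emph{Stretching work.} The conformation equation is tested by the entropy variable and weighted by $\frac{1-\beta}{2\Wi}$, giving the stretching contribution
\[
  \frac{1-\beta}{2\Wi}\bigl(\GG\AA_e+\AA_e\GG^T\bigr):(\Id-\AA_e^{-1}).
\]
We expand it term by term using $X:Y=\tr(X^TY)$ and the symmetry of $\AA_e$:
\begin{align*}
  (\GG\AA_e+\AA_e\GG^T):\Id &= 2\tr(\GG\AA_e)=2\,\AA_e:\GG,\\
  (\GG\AA_e+\AA_e\GG^T):\AA_e^{-1} &= 2\tr\GG=0 .
\end{align*}
The second line uses the cyclicity of the trace and $\tr\GG=0$. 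Hence the stretching contribution equals $\frac{1-\beta}{\Wi}\AA_e:\GG$.

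\emph{The defect and when it vanishes.} The momentum term enters the energy balance with a plus sign and the stretching term with a minus sign, exactly as in the displayed cancellation identity. Their sum is therefore
\[
  \frac{1-\beta}{\Wi}(\AA_m-\AA_e):\GG ,
\]
which is the claimed uncancelled coupling contribution. For the sufficiency statement, decompose $\AA_m-\AA_e=\frac{1}{d}\tr(\AA_m-\AA_e)\Id+\mathrm{dev}(\AA_m-\AA_e)$. For trace-free $\GG$ the isotropic part pairs to zero. If the deviatoric parts agree, then $\mathrm{dev}(\AA_m-\AA_e)=0$, so the defect vanishes for every trace-free $\GG$. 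In particular this holds when $\AA_m=\AA_e$.

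The only delicate step is bookkeeping of signs and conventions. We must use the same convention for $\GG$ as for $\grad\uu$ in both equations, i.e. whether $\GG$ or $\GG^T$ appears in $(\grad\uu)\AA$. We must also track the factor $\tfrac12$ in the entropy-side weight. That factor is what makes the two contributions $2\AA_e:\GG$ and $\AA_m:\GG$ carry matching coefficients.
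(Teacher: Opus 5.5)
Your proof is correct and follows the paper's argument: expand the stretching term against $\Id-\AA_e^{-1}$, compare it with the momentum work $(\AA_m-\Id):\GG$, and subtract. The only difference is that the paper uses the identity $\frac12(\GG\AA_e+\AA_e\GG^T):(\Id-\AA_e^{-1})=(\AA_e-\Id):\GG$, valid for any $\GG$, so the $\Id:\GG$ terms cancel on subtraction, whereas you discard them early via $\tr\GG=0$; your isotropic/deviatoric splitting for the sufficiency clause makes explicit a step the paper leaves implicit.
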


\begin{proof}
The stretching identity gives
\[
  \frac12(\GG\AA_e+\AA_e\GG^T):(\Id-\AA_e^{-1})
  =
  (\AA_e-\Id):\GG .
\]
The momentum stress work contains $(\AA_m-\Id):\GG$.  Subtracting the two expressions leaves the displayed defect.
\end{proof}

\section{Why positive reconstruction can be nonphysical}

\begin{proposition}
Let $a=\exp\psi$ and $\phi(a)=a-\log a-1$.  If $\psi=\psi_0+\eta$ with zero mean perturbation $\eta$, then the reconstructed physical mean and entropy satisfy, for nontrivial $\eta$,
\[
  \langle e^\psi\rangle>e^{\psi_0},\qquad
  \langle\phi(e^\psi)\rangle>\phi(e^{\psi_0})
\]
whenever the state is away from the entropy minimizer in the corresponding direction.
\end{proposition}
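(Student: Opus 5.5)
The plan is to read both inequalities as strict Jensen inequalities for two strictly convex scalar functions of the logarithmic variable $\psi$. Here $\langle\cdot\rangle$ is an average against a probability measure, for instance positive quadrature weights summing to one over a cell. ``Zero mean'' means $\langle\eta\rangle=0$, and ``nontrivial'' means that $\eta$ is not almost surely zero with respect to that measure.

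For the first inequality I would take $g(\psi)=e^{\psi}$, which satisfies $g''=e^\psi>0$. Strict Jensen then gives
\[
  \langle e^{\psi_0+\eta}\rangle> e^{\psi_0+\langle\eta\rangle}=e^{\psi_0}.
\]
To make the bias quantitative, I would use the Taylor formula with integral remainder around $\psi_0$. The linear term averages to zero because $\langle\eta\rangle=0$, which leaves
\[
  \langle e^{\psi}\rangle-e^{\psi_0}
  =\Big\langle \eta^2\int_0^1(1-s)e^{\psi_0+s\eta}\dd s\Big\rangle
  \ge \tfrac12 e^{\psi_0-\norm{\eta}_\infty}\langle\eta^2\rangle>0 .
\]

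For the entropy inequality I would use the composite $f(\psi)=\phi(e^\psi)=e^\psi-\psi-1$, rather than working with $\phi$ and $\exp$ separately. Its derivatives are $f'(\psi)=e^\psi-1$ and $f''(\psi)=e^\psi>0$, so $f$ is strictly convex on all of $\mathbb R$. The same Jensen and Taylor argument gives
\[
  \langle\phi(e^\psi)\rangle-\phi(e^{\psi_0})\ge\tfrac12 e^{\psi_0-\norm{\eta}_\infty}\langle\eta^2\rangle>0 .
\]
The first-order term $f'(\psi_0)\langle\eta\rangle$ drops out here as well, because $\eta$ has zero mean.

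The strict inequality therefore holds even at the minimizer $\psi_0=0$, so the qualifier ``away from the entropy minimizer'' in the statement is not needed. I would remark that it is harmless, being a special case of a condition-free result. If a bounded-support assumption on $\eta$ is unavailable, I would fall back on the strict Jensen form, whose only requirement is that $\eta$ is nondegenerate. The step that needs care is not analytic but interpretive: fixing $\langle\cdot\rangle$ as a positive normalized average and making ``nontrivial'' mean non-degenerate. Once that is done, both claims reduce to strict convexity of $e^\psi$ and of $e^\psi-\psi-1$.
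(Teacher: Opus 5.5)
Your proposal is correct and is essentially the paper's argument: strict Jensen applied to the convex maps $e^{\psi}$ and $\phi(e^{\psi})=e^{\psi}-\psi-1$, with a Taylor expansion exhibiting a bias proportional to $\langle\eta^2\rangle$. Your integral-remainder lower bound and your observation that $f''=e^{\psi}>0$ on all of $\mathbb R$ are valid sharpenings of the paper's leading-order, ``relevant interval'' phrasing: since $\langle\eta\rangle=0$, the entropy excess in fact equals $\langle e^{\psi}\rangle-e^{\psi_0}$, so the ``away from the minimizer'' qualifier is superfluous.
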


\begin{proof}
This is Jensen's inequality applied to the convex exponential map and to the convex entropy profile $\phi(e^\psi)$ on the relevant interval.  A Taylor expansion gives the leading positive bias proportional to $\langle\eta^2\rangle$.
\end{proof}

\begin{lemma}
Let $\AA=\Exp\Psi$ and perturb $\Psi$ by a small symmetric matrix $E$.  Then
\[
  \Exp(\Psi+E)-\Exp\Psi
  =
  \int_0^1 \Exp((1-s)\Psi)E\Exp(s\Psi)\,\dd s+\calO(\norm{E}^2).
\]
Consequently, the induced perturbation in $\AA$ is amplified by the local stretch scale, and stress-force defects grow with $\lambda_{\max}(\AA)$.
\end{lemma}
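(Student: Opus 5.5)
The first-order formula will come from the Duhamel (variation-of-constants) identity for the matrix exponential. The amplification statement will then follow by estimating that integral with the spectral decomposition of $\Psi$.

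\emph{Step 1: an exact identity.} Fix symmetric $\Psi$ and $E$, and set $F(s)=\Exp((1-s)\Psi)\Exp(s(\Psi+E))$ for $s\in[0,1]$. Since $\Psi$ commutes with $\Exp((1-s)\Psi)$, differentiation gives
\[
F'(s)=\Exp((1-s)\Psi)\,E\,\Exp(s(\Psi+E)).
\]
Integrating from $0$ to $1$ yields
\[
\Exp(\Psi+E)-\Exp\Psi=\int_0^1\Exp((1-s)\Psi)\,E\,\Exp(s(\Psi+E))\dd s .
\]
Next replace $\Exp(s(\Psi+E))$ by $\Exp(s\Psi)$ and apply the same identity to the difference. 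The difference $\Exp(s(\Psi+E))-\Exp(s\Psi)$ is bounded by $s\norm{E}e^{s(\norm{\Psi}+\norm{E})}$. This is the crude bound obtained from the series for $\Exp$, or from the identity itself combined with $\norm{\Exp(X)}\le e^{\norm X}$. Hence the remainder is $\calO(\norm{E}^2)$, with constant depending on $\norm\Psi$ for $\norm E\le1$. This gives the displayed formula.

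\emph{Step 2: spectral form of the Fréchet derivative.} Diagonalize $\Psi=Q\Lambda Q^T$ with $\Lambda=\mathrm{diag}(\psi_i)$, so that $\AA=Q\,\mathrm{diag}(\lambda_i)Q^T$ with $\lambda_i=e^{\psi_i}$. Write $\tilde E=Q^TEQ$. The integral then acts entrywise:
\[
\big(Q^T D\Exp(\Psi)[E]\,Q\big)_{ij}=\tilde E_{ij}\int_0^1 e^{(1-s)\psi_i+s\psi_j}\dd s=\tilde E_{ij}\,\frac{\lambda_i-\lambda_j}{\psi_i-\psi_j}.
\]
Here the diagonal value, obtained when $\psi_i=\psi_j$, is $\lambda_i$. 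The divided difference is the logarithmic mean $L(\lambda_i,\lambda_j)$. It lies between $\min(\lambda_i,\lambda_j)$ and $\max(\lambda_i,\lambda_j)$, and equals $\lambda_{\max}(\AA)$ on the top eigendirection. Therefore
\[
\lambda_{\min}(\AA)\norm{E}_F\le\norm{D\Exp(\Psi)[E]}_F\le\lambda_{\max}(\AA)\norm{E}_F,
\]
and both bounds are attained, the upper one by $E=qq^T$ with $q$ the top eigenvector. This makes precise the claim that a logarithmic perturbation is amplified by the local stretch scale.

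\emph{Step 3: stress-force consequence.} The polymeric stress is $\frac{1-\beta}{\Wi}(\AA-\Id)$. The stress perturbation at a quadrature point is therefore $\frac{1-\beta}{\Wi}D\Exp(\Psi)[E]+\calO(\norm E^2)$. Its size in the top eigendirection is $\frac{1-\beta}{\Wi}\lambda_{\max}(\AA)\norm E$. Consequently, any reconstruction error in $\Log\AA$ at fixed size yields a stress, and hence force, defect growing linearly in $\lambda_{\max}(\AA)$.

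The main obstacle is keeping the remainder genuinely second order with an honest constant. The $\calO(\norm E^2)$ constant grows like $e^{\norm\Psi}\sim\lambda_{\max}(\AA)$, so the expansion is uniform only on bounded spectral sets such as $\calK_{m,M}$. I will state this dependence explicitly rather than claim uniformity.
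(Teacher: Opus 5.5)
Your proof is correct, and for the amplification claim it takes a sharper route than the paper. The paper simply cites the integral as the Fr\'echet derivative of $\Exp$. It then bounds that derivative in one line by submultiplicativity, $\norm{e^{(1-s)\Psi}Ee^{s\Psi}}_F\le\norm{e^{(1-s)\Psi}}_2\norm{E}_F\norm{e^{s\Psi}}_2=\lambda_{\max}(\AA)\norm{E}_F$, using $\norm{e^{tX}}_2=e^{t\lambda_{\max}(X)}$ for symmetric $X$ and $t\ge0$. Sharpness is only asserted for $E$ commuting with $\Psi$ along the top eigendirection. You instead derive the formula with an explicit Duhamel remainder and then diagonalize, which gives the exact Daleckii--Krein multipliers, namely the logarithmic means $L(\lambda_i,\lambda_j)$. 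This buys a two-sided estimate with $\lambda_{\min}(\AA)$ and $\lambda_{\max}(\AA)$, attainment of both bounds, and an exact description of which log-space modes are amplified. For instance, a mode coupling the top eigenvalue $\Lambda$ to an $\calO(1)$ eigenvalue is amplified only by $(\Lambda-1)/\log\Lambda$, which the paper's one-line bound cannot detect.

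There are two precision points to fix.

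\emph{Remainder constant.} Your constant $e^{\norm{\Psi}}$ is at least $\max(\lambda_{\max}(\AA),\lambda_{\min}(\AA)^{-1})$, so it is not $\sim\lambda_{\max}(\AA)$ in general. Insert $\norm{e^{tX}}_2=e^{t\lambda_{\max}(X)}$ and Weyl's inequality $\lambda_{\max}(\Psi+E)\le\lambda_{\max}(\Psi)+\norm{E}_2$ into your double Duhamel integral. This gives the cleaner bound $\tfrac12\lambda_{\max}(\AA)e^{\norm{E}_2}\norm{E}_2\norm{E}_F$, which is the honest constant you wanted.

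\emph{Step 3.} By your own lower bound, Step 3 should not say that \emph{any} fixed-size log error yields a defect linear in $\lambda_{\max}(\AA)$. Only the part of $Q^TEQ$ on index pairs with a large logarithmic mean is amplified that way. Passing from the pointwise stress to the force also requires the spatial derivative of $D\Exp_{\Psi(x)}[E(x)]$, which brings in $D^2\Exp_\Psi$. That term is again bounded by $\lambda_{\max}(\AA)$ times the product of its arguments' norms, so the upper bound survives. The paper is equally informal at this last step, so these are refinements rather than gaps.
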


\begin{proof}
The formula is the Frechet derivative of the matrix exponential.  The norm bound follows by estimating the exponential factors with the largest eigenvalue of $\AA$.
\end{proof}

\begin{proposition}
Suppose the tensor used in the momentum stress work is $\AA_m$ while the tensor used in the conformation entropy/stretching term is $\AA_e$.  Both may be positive definite.  The uncancelled polymeric-work defect contains
\[
  \frac{1-\beta}{\Wi}
  \int_\Omega(\AA_m-\AA_e):\grad\uu\,\dd x ,
\]
which has no fixed sign.
\end{proposition}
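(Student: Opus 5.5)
The plan is to reduce the integrated statement to the pointwise identity already established in the earlier proposition on the coupling contribution at a quadrature point, and then to show non-definiteness by an explicit construction. I would test the discrete momentum equation with $\uu$ and the conformation equation with $\HH(\AA_e)=\Id-\AA_e^{-1}$, weighted by $\frac{1-\beta}{2\Wi}$, exactly as in the continuous cancellation displayed in Section~2. The momentum side produces $\frac{1-\beta}{\Wi}\int_\Omega(\AA_m-\Id):\grad\uu\,\dd x$. For trace-free $\GG=\grad\uu$ (incompressibility \eqref{eq:div}), the stretching identity
\[
\tfrac12(\GG\AA_e+\AA_e\GG^T):(\Id-\AA_e^{-1})=(\AA_e-\Id):\GG
\]
turns the conformation side into $\frac{1-\beta}{\Wi}\int_\Omega(\AA_e-\Id):\grad\uu\,\dd x$. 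Subtracting and integrating the pointwise defect over $\Omega$ (or summing over quadrature points) leaves exactly the displayed term. The relaxation, diffusion, and transport contributions do not mix $\AA_m$ with $\AA_e$, so they do not enter this defect.

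For the ``no fixed sign'' claim, it suffices to exhibit one configuration of each sign. I would fix positive definite $\AA_m\neq\AA_e$ whose difference $D=\AA_m-\AA_e$ has nonzero deviatoric part. One example is $\AA_e=\Id$ and $\AA_m=\operatorname{diag}(1+\delta,1)$ in two dimensions. Then I would choose a trace-free symmetric $\GG$ with $D:\GG>0$, for instance $\GG=\operatorname{diag}(1,-1)$, which gives $D:\GG=\delta$. Replacing $\GG$ by $-\GG$, which is still trace-free, flips the sign. To realize these pointwise choices in the integral, I would take a localized divergence-free velocity whose gradient is approximately $\pm\GG$ on a small region where the tensors are approximately constant. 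Alternatively, I would note directly that the map $\uu\mapsto\int(\AA_m-\AA_e):\grad\uu$ is linear in $\uu$ and hence odd, so whenever it is nonzero it takes both signs.

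The main obstacle is making sure the defect is genuinely nonzero, that is, not annihilated by the trace-free constraint. If $D$ is a multiple of $\Id$, then $D:\GG=0$ and the proposition is vacuous. I would therefore state the nondegeneracy condition explicitly: the deviatoric part of $\AA_m-\AA_e$ must not vanish identically. Under this condition the linearity argument gives a sign-indefinite defect, and this is consistent with the cancellation criterion in the earlier proposition.
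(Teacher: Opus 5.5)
Your derivation of the defect is the paper's proof made explicit. The paper repeats the free-energy calculation with $\AA_m$ in the stress work and $\AA_e$ in the stretching/entropy term and subtracts, and your use of the stretching identity with weight $\frac{1-\beta}{2\Wi}$ is exactly that computation. For ``no fixed sign'' the paper only asserts that the residual can inject or remove energy. Your oddness observation (for fixed tensors the defect is linear in $\uu$) is the right way to make this precise. The gap is in the nondegeneracy condition you attach to it and in the localization construction. Both treat the integrated defect as a sum of independent pointwise values $D:\GG$, and it is not one.

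For periodic or no-slip $\uu$, $\int_\Omega D:\grad\uu\,\dd x=-\int_\Omega(\divv D)\cdot\uu\,\dd x$ with $D=\AA_m-\AA_e$.

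\emph{Your example fails.} Take $\AA_e=\Id$ and $\AA_m=\operatorname{diag}(1+\delta,1)$ as constant fields. Their difference has nonzero deviatoric part, yet the defect vanishes for every admissible velocity, in particular in the periodic setting of the paper's manufactured test. So the claim that a nonvanishing deviatoric part makes the defect sign-indefinite is false.

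\emph{The localized velocity does not rescue it.} Let $\uu$ be compactly supported with $\grad\uu=\pm\GG$ on a small ball $B$ on which $D\approx D(x_0)$. Since $\int_\Omega\grad\uu\,\dd x=0$, the cutoff region contributes $\mp|B|\,\GG$ and cancels the interior, so the term $D(x_0):\int_\Omega\grad\uu\,\dd x$ vanishes identically. What survives is the higher-order remainder $\int_\Omega (D-D(x_0)):\grad\uu\,\dd x$, governed by $\grad D$, and its sign is unrelated to $D(x_0):\GG$.

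\emph{The correct hypothesis.} You need the functional $\uu\mapsto\int_\Omega D:\grad\uu\,\dd x$ to be nonzero on admissible divergence-free fields, i.e.\ $\divv(\AA_m-\AA_e)$ is not a gradient. Alternatively, localize the other way round: take $\grad\uu=\pm\GG$ on a neighborhood of the support of the mismatch, which gives $\pm\GG:\int_\Omega D\,\dd x$. Under such a condition, either your oddness argument or reversing the mismatch at fixed $\uu$ produces both signs. The latter is what the paper's work-defect table does with positive and negative log biases. Your deviatoric criterion is correct only at a single quadrature point with $\GG$ free, which is the setting of the earlier pointwise proposition.
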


\begin{proof}
Repeat the energy calculation with $\AA_m$ in the momentum stress and $\AA_e$ in the stretching/entropy term.  The cancellation is exact only when the two tensors agree.  The displayed residual remains and can inject or remove energy depending on its sign.
\end{proof}

These results show that log or square-root positivity is not a sufficient criterion for high-Weissenberg robustness.

\section{Entropy-compatible corrected log reconstruction}

Let $\QQ_h=\{(x_q,w_q)\}$ be the quadrature rule.  A physical predictor and a raw high-order logarithmic reconstruction are denoted by
\[
  \widehat\AA_q=\Exp\widehat\Psi_q,\qquad
  \widetilde\AA_q=\Exp\widetilde\Psi_q .
\]
For $0\le\theta\le1$ define the logarithmic path
\[
  \AA_q(\theta)
  =
  \Exp\!\big(\widehat\Psi_q+\theta(\widetilde\Psi_q-\widehat\Psi_q)\big).
\]
The corrected reconstruction chooses the largest $\theta$ such that
\[
  \sum_qw_q\Phi(\AA_q(\theta))
  \le
  \sum_qw_q\Phi(\widehat\AA_q)+\tau_h ,
  \label{eq:budget}
\]
where $\tau_h\ge0$ is the entropy budget.  If the raw reconstruction satisfies the budget, then $\theta=1$ and nothing is changed.

The implementation is a scalar bisection problem:
\begin{enumerate}[label=\textup{(\arabic*)},leftmargin=*]
\item Compute the physical entropy $J(0)$ and raw entropy $J(1)$.
\item If $J(1)\le J(0)+\tau_h$, accept the raw reconstruction.
\item Otherwise bracket $\theta_\star$ in $[0,1]$ and bisect until the remaining entropy gap or bracket length is below the requested tolerance.
\item Use the accepted tensor in every stress, stretching, entropy, and quadrature term.
\end{enumerate}
The same rule can be applied cellwise with local budgets $\tau_K$.  Cellwise selection avoids damping an entire mesh because of a localized stress layer, while the global entropy estimate follows after summing the cell budgets.

\subsection{Stopping criteria and budgets}

The bisection loop can be stopped by a bracket criterion or by an entropy-gap criterion.  If $J$ is Lipschitz on the interval with constant $L_J$, then a bracket of length $\delta_\theta$ gives an entropy uncertainty at most $L_J\delta_\theta$.  On compact spectral sets, $L_J$ is bounded by the endpoint logarithmic defect and the maximum conformation stretch.  Hence a fixed bisection depth can be chosen so that the search error is below the discretization budget.

For mesh-dependent calculations we use budgets of the form
\[
  \tau_h=C_\tau h^{2k+2}
  \quad\hbox{or}\quad
  \tau_K=C_\tau h_K^{2k+2}|K|.
\]
This scaling matches the squared size of a $(k+1)$st-order reconstruction defect in an entropy metric.  If the raw reconstruction is already compatible to that order, the correction is inactive.  If it is not, the correction removes only the entropy-incompatible component.  This is the sense in which the method is a high-order admissibility filter rather than a low-order limiter.

In floating point arithmetic the acceptance test should include a guard:
\[
  J(\theta)\le J(0)+\tau_h-\epsilon_{\rm eval}.
\]
Here $\epsilon_{\rm eval}$ bounds the accumulated error in evaluating exponentials, determinants, and quadrature sums.  This prevents a roundoff-level violation of the budget from being interpreted as a physical entropy increase.

\subsection{Structural invariances}

The correction preserves the tensorial information that is usually needed by a high-order implementation.  First, it is frame indifferent.  If the endpoint logarithms are transformed by an orthogonal matrix $Q$, then
\[
  \Exp\!\big(Q^T\Psi Q\big)=Q^T\Exp(\Psi)Q,
  \qquad
  \Phi(Q^TAQ)=\Phi(A).
\]
Therefore the accepted parameter $\theta_\star$ is unchanged by a rigid change of tensor basis, and the accepted conformation tensor transforms covariantly.

Second, if the raw logarithmic reconstruction defect has zero cell average,
\[
  \sum_{q\in K}w_q(\widetilde\Psi_q-\widehat\Psi_q)=0,
\]
then a cellwise correction using a single $\theta_K$ preserves that zero log-moment:
\[
  \sum_{q\in K}w_q
  \big(\widehat\Psi_q+\theta_K(\widetilde\Psi_q-\widehat\Psi_q)
  -\widehat\Psi_q\big)=0.
\]
Thus the correction damps high-frequency incompatible modes without shifting the cell average in logarithmic variables.  This property is useful when the underlying reconstruction is conservative or moment-preserving in log space.

Third, the correction is monotone with respect to the budget.  If $\tau_h^{(1)}\le\tau_h^{(2)}$, then the corresponding admissible sets satisfy
\[
  \{\theta:J(\theta)\le J(0)+\tau_h^{(1)}\}
  \subset
  \{\theta:J(\theta)\le J(0)+\tau_h^{(2)}\},
\]
so the largest admissible parameter cannot decrease when the budget is relaxed.  This gives a transparent way to tune the method from strict no-extra-entropy reconstruction to high-order permissive reconstruction.

\subsection{Relation to square-root reconstruction}

The same admissibility principle can be applied to square-root variables.  If $\AA=\BB\BB^T$ and a physical predictor $\widehat\BB$ and raw reconstruction $\widetilde\BB$ are available, one may consider
\[
  \BB(\theta)=\widehat\BB+\theta(\widetilde\BB-\widehat\BB),
  \qquad
  \AA(\theta)=\BB(\theta)\BB(\theta)^T ,
\]
and select the largest $\theta$ satisfying the same entropy budget.  This gives positivity whenever $\BB(\theta)$ remains nonsingular, and it provides the same reconstruction-level entropy control.

There are two reasons the logarithmic path is emphasized in this paper.  First, $\Exp$ maps every symmetric logarithm to $\SPD$, so positivity does not require a separate nonsingularity check along the segment.  Second, in high-stretch regimes the smooth variable is often closer to $\Psi=\Log\AA$ than to either $\AA$ or a chosen square root.  The log path therefore gives a natural way to preserve spectral envelopes and measure relative error.  The numerical comparisons with square-root reconstruction are not meant to dismiss square-root methods; they show that positivity through any nonlinear map can introduce entropy bias unless an entropy compatibility check is added.

The square-root variant is still useful in implementations where factor variables are already evolved.  In that case the corrected reconstruction should use the same principle: define a physical predictor, define the raw high-order defect, choose the largest admissible parameter with respect to $\Phi(\AA)$, and use the accepted tensor consistently in stress work and entropy variables.  The analysis of the work defect is unchanged because it depends only on the final tensors entering the coupled balance.

\begin{definition}
A reconstruction is entropy-compatible if the accepted tensor satisfies \eqref{eq:budget} and the same accepted tensor is used in the stress work, stretching term, entropy variable, and entropy quadrature.
\end{definition}

\begin{proposition}
The function
\[
  J(\theta)=\sum_qw_q\Phi(\AA_q(\theta))
\]
is convex on $[0,1]$.  Hence the admissible set $\{\theta:J(\theta)\le J(0)+\tau_h\}$ is an interval containing zero, and the maximal admissible parameter $\theta_\star$ exists.  Bisection returns a positive tensor satisfying the entropy budget.
\end{proposition}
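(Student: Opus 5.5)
The plan is to reduce convexity of $J$ to convexity of a single scalar function per quadrature point. Write $\Psi_q(\theta)=\widehat\Psi_q+\theta D_q$ with $D_q=\widetilde\Psi_q-\widehat\Psi_q$ symmetric. Since $\det\Exp\Psi=e^{\tr\Psi}$, we have
\[
  \Phi(\AA_q(\theta))=\tr\Exp(\Psi_q(\theta))-\tr\Psi_q(\theta)-d .
\]
The last two terms are affine in $\theta$. So it suffices to show that $g_q(\theta)=\tr\Exp(\widehat\Psi_q+\theta D_q)$ is convex. The weights $w_q$ are positive, so $J$ is then a nonnegative combination of convex functions and hence convex. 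This is the main step; the remaining claims are elementary consequences.

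For convexity of $g_q$ I would compute its second derivative using the Fréchet derivative formula of the preceding Lemma. Taking the trace of that formula and using cyclicity of the trace gives $g_q'(\theta)=\tr\big(\Exp(\Psi_q(\theta))D_q\big)$. Differentiating once more, again with the Lemma, gives
\[
  g_q''(\theta)=\int_0^1\tr\big(\Exp((1-s)\Psi)\,D_q\,\Exp(s\Psi)\,D_q\big)\dd s ,
  \qquad \Psi=\Psi_q(\theta).
\]
To see that the integrand is nonnegative, diagonalize $\Psi=U\Lambda U^T$ and set $\widetilde D=U^TD_qU$. The integrand becomes $\sum_{i,j}e^{(1-s)\lambda_i+s\lambda_j}\widetilde D_{ij}^2\ge0$. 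Hence $g_q''\ge0$ on $[0,1]$.

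An alternative route is the general fact that $X\mapsto\tr f(X)$ is convex for convex scalar $f$. The explicit computation is preferable because it reuses the Lemma already stated. The only care needed is justifying the second differentiation: the exponential is analytic, so the derivative of $\theta\mapsto\Exp(\Psi_q(\theta))$ is given exactly by the integral formula, with no remainder.

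With convexity in hand, the rest follows in order.
\begin{enumerate}[label=\textup{(\roman*)},leftmargin=*]
\item The admissible set $\{\theta\in[0,1]:J(\theta)\le J(0)+\tau_h\}$ is a sublevel set of a convex function, so it is convex, i.e.\ an interval. It contains $0$ because $\tau_h\ge0$.
\item $J$ is continuous, so this set is closed. It is therefore a closed interval $[0,\theta_\star]$, and the maximum $\theta_\star$ exists.
\item For bisection, keep a bracket $[a,b]$ with $a$ admissible, starting from $a=0$. If $J(1)\le J(0)+\tau_h$, accept $\theta=1$. Otherwise start with $b=1$ inadmissible.
\item By the interval property, testing the midpoint $m$ correctly decides whether $\theta_\star\ge m$. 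If $m$ is admissible, replace $a$ by $m$; otherwise replace $b$ by $m$.
\item The returned parameter is the lower endpoint $a$. It satisfies the budget by construction and lies within the bracket length of $\theta_\star$.
\item The returned tensors $\Exp(\Psi_q(a))$ are symmetric positive definite, because $\Exp$ maps every symmetric matrix into $\SPD$.
\end{enumerate}
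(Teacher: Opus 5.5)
Your proof is correct and follows the paper's argument. The paper (in its proof and Appendix C) likewise uses $\log\det\Exp\Psi=\tr\Psi$ to make the log-determinant part affine along the path, shows $\theta\mapsto\tr\Exp(\widehat\Psi_q+\theta D_q)$ is convex by a second-variation computation made nonnegative via diagonalization (its double-integral form of $g''$ is equivalent to your single integral), and then uses that positive weights preserve convexity. Your explicit appeal to continuity to get a closed sublevel set $[0,\theta_\star]$, and your bracket invariant for bisection, fill in steps the paper only calls immediate.
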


\begin{proof}
For each quadrature point, $\theta\mapsto\Phi(\Exp(\widehat\Psi_q+\theta\delta\Psi_q))$ is convex because the trace-exponential part is convex and the log-determinant contribution is affine in the logarithmic variable.  Positive quadrature preserves convexity.  The interval property and bisection follow immediately.
\end{proof}

\begin{proposition}
Among all admissible states on the logarithmic segment, $\theta_\star$ minimizes the logarithmic distance to the raw reconstruction.  Thus the correction is least damping on that path.
\end{proposition}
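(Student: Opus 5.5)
The plan is to reduce the statement to a scalar monotonicity fact along the segment. I will measure the logarithmic distance to the raw reconstruction in the quadrature-weighted Frobenius norm,
\[
  D(\theta)^2=\sum_q w_q\norm{\Log\AA_q(\theta)-\widetilde\Psi_q}_F^2 .
\]
Since $\Exp$ is a bijection from symmetric matrices onto $\SPD$ with inverse $\Log$, we have $\Log\AA_q(\theta)=\widehat\Psi_q+\theta\delta\Psi_q$, where $\delta\Psi_q=\widetilde\Psi_q-\widehat\Psi_q$. Hence $\Log\AA_q(\theta)-\widetilde\Psi_q=-(1-\theta)\delta\Psi_q$ and
\[
  D(\theta)=(1-\theta)\Big(\sum_qw_q\norm{\delta\Psi_q}_F^2\Big)^{1/2}.
\]
This is nonincreasing in $\theta$ on $[0,1]$. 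It is strictly decreasing unless $\delta\Psi\equiv0$ at all quadrature points with positive weight. In that degenerate case every point of the segment is the same state, and the claim is trivial. The same conclusion holds pointwise at each $q$ and for any unitarily invariant norm, since each is a positive multiple of $(1-\theta)$. So the choice of weighted norm is not essential.

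Next I invoke the preceding proposition. Convexity of $J$ makes the admissible set $\mathcal A=\{\theta\in[0,1]:J(\theta)\le J(0)+\tau_h\}$ a closed interval $[0,\theta_\star]$. It is nonempty because $J(0)\le J(0)+\tau_h$, and it is closed because $J$ is continuous as a composition of $\Exp$, trace, and $\log\det$. So $\theta_\star=\max\mathcal A$ exists. Every admissible $\theta$ therefore satisfies $\theta\le\theta_\star$, and monotonicity of $D$ gives $D(\theta_\star)\le D(\theta)$, with strict inequality for $\theta<\theta_\star$ in the nondegenerate case. Hence $\theta_\star$ is the unique minimizer of the log distance over admissible states.

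For the ``least damping'' reading, the damping applied to the raw log defect is the factor $1-\theta$. Minimizing the log distance is therefore the same as maximizing $\theta$, which is exactly the selection rule. I would also remark that for a cellwise correction the argument applies independently in each cell with its own budget $\tau_K$.

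The main point needing care is that ``admissible states on the segment'' must be identified with admissible parameters. This requires injectivity of $\theta\mapsto(\AA_q(\theta))_q$ when $\delta\Psi\ne0$, which follows from injectivity of $\Exp$ on symmetric matrices. If the bisection output is used instead of the exact $\theta_\star$, the statement holds up to the bracket tolerance. By the Lipschitz remark in the stopping-criteria discussion, this changes $D$ by at most $\delta_\theta\big(\sum_qw_q\norm{\delta\Psi_q}_F^2\big)^{1/2}$.
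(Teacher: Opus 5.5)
Your proposal is correct and follows the paper's proof, which is exactly the observation that the logarithmic distance from $\AA(\theta)$ to the raw endpoint is $(1-\theta)\norm{\widetilde\Psi-\widehat\Psi}$, so maximizing the admissible $\theta$ is the same as minimizing this distance. Your extra details (closedness and nonemptiness of the admissible set, uniqueness when $\widetilde\Psi\ne\widehat\Psi$, the bisection-tolerance remark) make explicit what the paper leaves implicit, and convexity of $J$ is not actually needed for this step, since continuity of $J$ and $0\in\mathcal A$ already give existence of $\max\mathcal A$.
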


\begin{proof}
The logarithmic distance from $\AA(\theta)$ to the raw endpoint is proportional to $(1-\theta)\norm{\widetilde\Psi-\widehat\Psi}$.  Maximizing admissible $\theta$ is therefore equivalent to minimizing this distance.
\end{proof}

\begin{proposition}
If $\alpha\Id\le\widehat\Psi_q,\widetilde\Psi_q\le\beta\Id$ for all $q$, then every corrected state satisfies
\[
  e^\alpha\Id\le\AA_q(\theta)\le e^\beta\Id .
  \]
The construction is also invariant under orthogonal changes of frame.
\end{proposition}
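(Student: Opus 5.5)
The plan is to work entirely in the logarithmic variable, where the path is affine. The Loewner order is convex, so the bound propagates from the endpoints to the whole segment. Then I would transfer the spectral bound through $\Exp$ using the spectral theorem.

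\emph{Step 1.} Fix $q$ and write $\Psi_q(\theta)=(1-\theta)\widehat\Psi_q+\theta\widetilde\Psi_q$, which is symmetric. For any unit vector $v$,
\[
  v^T\Psi_q(\theta)v=(1-\theta)\,v^T\widehat\Psi_qv+\theta\,v^T\widetilde\Psi_qv .
\]
Both endpoint quadratic forms lie in $[\alpha,\beta]$ by hypothesis. Since $0\le\theta\le1$, the convex combination also lies in $[\alpha,\beta]$. Hence $\alpha\Id\le\Psi_q(\theta)\le\beta\Id$, so every eigenvalue $\mu_i$ of $\Psi_q(\theta)$ lies in $[\alpha,\beta]$.

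\emph{Step 2.} Diagonalize $\Psi_q(\theta)=Q\,\mathrm{diag}(\mu_i)\,Q^T$ with $Q$ orthogonal. Then
\[
  \AA_q(\theta)=Q\,\mathrm{diag}(e^{\mu_i})\,Q^T .
\]
Because $t\mapsto e^t$ is increasing, each $e^{\mu_i}$ lies in $[e^\alpha,e^\beta]$. This gives $e^\alpha\Id\le\AA_q(\theta)\le e^\beta\Id$.

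This bound holds for every $\theta\in[0,1]$, so it applies in particular to the maximal admissible $\theta_\star$ from the preceding convexity proposition. It also covers any bisection iterate, because bisection only evaluates points of the segment. Cellwise variants with separate $\theta_K$ are covered as well, since the argument is pointwise in $q$. Note that only monotonicity of the scalar exponential is needed, not operator monotonicity, because $\AA_q(\theta)$ and $\Psi_q(\theta)$ share eigenvectors. This is the step I would flag as the potential trap: one must not argue through the endpoints $\widehat\AA_q$ and $\widetilde\AA_q$, which generally do not commute.

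\emph{Step 3: frame invariance.} Replace the endpoint logarithms by $Q^T\widehat\Psi_qQ$ and $Q^T\widetilde\Psi_qQ$ for a fixed orthogonal $Q$. The path becomes $Q^T\Psi_q(\theta)Q$. By the identity $\Exp(Q^T\Psi Q)=Q^T\Exp(\Psi)Q$, the states become $Q^T\AA_q(\theta)Q$. Since $\Phi(Q^T\AA Q)=\Phi(\AA)$ (trace and determinant are invariant), the function $J(\theta)$ is unchanged.

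It follows that the admissible set, $\theta_\star$, and every bisection decision are unchanged. The accepted tensor transforms covariantly, and the spectral bound of Steps 1--2 is likewise frame independent.
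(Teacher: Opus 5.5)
Your proof is correct and matches the paper's argument: the logarithmic path is a convex combination of the endpoint logarithms, so its spectrum stays in $[\alpha,\beta]$; exponentiation transfers this to the bound $e^\alpha\Id\le\AA_q(\theta)\le e^\beta\Id$; and frame invariance follows from $\Exp(Q^T\Psi Q)=Q^T\Exp(\Psi)Q$ together with the invariance of $\Phi$. You supply details the paper leaves implicit, namely the quadratic-form proof of the Loewner bound and the observation that $\AA_q(\theta)$ and $\Psi_q(\theta)$ share eigenvectors, so scalar monotonicity of the exponential suffices.
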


\begin{proof}
The logarithmic path is a convex combination of endpoint logarithms, so its spectrum remains in $[\alpha,\beta]$.  Exponentiation gives the conformation bounds.  Orthogonal equivariance follows from $\Exp(Q^TYQ)=Q^T\Exp(Y)Q$ and the invariance of $\Phi$.
\end{proof}

\begin{proposition}
Suppose the reconstruction defect satisfies
\[
  \sum_q w_q\,D\Phi(\widehat\AA_q):
  D\Exp_{\widehat\Psi_q}(\widetilde\Psi_q-\widehat\Psi_q)
  =\calO(h^{2k+2})
\]
and $\norm{\widetilde\Psi-\widehat\Psi}_{\QQ_h}=\calO(h^{k+1})$ on a compact spectral set.  If $\tau_h$ is chosen of order $h^{2k+2}$, then the correction is inactive or asymptotically small; more precisely, the accepted state preserves the same log-relative order as the raw reconstruction.
\end{proposition}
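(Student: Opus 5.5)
We will Taylor-expand $J(\theta)=\sum_q w_q\Phi(\AA_q(\theta))$ about $\theta=0$ along the logarithmic segment and compare the result with the budget. Write $\delta\Psi_q=\widetilde\Psi_q-\widehat\Psi_q$.

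\textbf{Step 1: expansion of $J$.} By the chain rule and the Fr\'echet derivative of the exponential from the earlier lemma,
\[
  J'(0)=\sum_q w_q\,D\Phi(\widehat\AA_q):D\Exp_{\widehat\Psi_q}(\delta\Psi_q).
\]
The first hypothesis says exactly that $J'(0)=\calO(h^{2k+2})$. The endpoints $\widehat\Psi_q,\widetilde\Psi_q$ have spectra in a fixed interval $[\alpha,\beta]$ (this is the compact spectral set). By the spectral-bound proposition, every $\AA_q(\theta)$ then lies in $\calK_{e^\alpha,e^\beta}$. On that set $\theta\mapsto\Phi(\Exp(\widehat\Psi_q+\theta\delta\Psi_q))$ is smooth. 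Its second derivative is bounded by $C\norm{\delta\Psi_q}_F^2$, with $C$ depending only on $\alpha,\beta$: it is the trace-exponential part of the convexity proposition, whose second derivative involves $\Exp$ and its first two derivatives on a bounded spectral set. Integrating twice gives
\[
  J(\theta)-J(0)\le \theta J'(0)+\tfrac{C}{2}\theta^2\norm{\delta\Psi}_{\QQ_h}^2
  \le C_1h^{2k+2},
\]
using the second hypothesis $\norm{\delta\Psi}_{\QQ_h}=\calO(h^{k+1})$.

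\textbf{Step 2: case split on the budget constant.} Take $\tau_h=C_\tau h^{2k+2}$ with $C_\tau\ge C_1$. Then $J(1)\le J(0)+\tau_h$, the test in the bisection algorithm accepts the raw reconstruction, and $\theta_\star=1$: the correction is inactive. If instead $C_\tau<C_1$, only the \emph{asymptotically small} claim is available. Whatever $\theta_\star\in[0,1]$ is returned, the accepted logarithm $\widehat\Psi_q+\theta_\star\delta\Psi_q$ differs from $\widehat\Psi_q$ by $\theta_\star\norm{\delta\Psi}_{\QQ_h}=\calO(h^{k+1})$, and from the raw log by $(1-\theta_\star)\norm{\delta\Psi}_{\QQ_h}=\calO(h^{k+1})$. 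By the least-damping proposition, no other admissible point on the segment is closer to the raw log. So the log-relative error of the accepted state, measured against the physical predictor or against any target within $\calO(h^{k+1})$ of either endpoint, is of the same order $h^{k+1}$ as the raw reconstruction's. The two-sided equivalence $c_{m,M}\norm{\cdot}_F^2\le\Phi(\cdot\mid\cdot)\le C_{m,M}\norm{\cdot}_F^2$ transfers this to relative entropy at order $h^{2k+2}$.

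\textbf{Main obstacle.} The delicate point is making the phrase ``preserves the same log-relative order'' precise. The constants $C$ and $C_1$ must be uniform in $h$, which needs uniform spectral bounds and bounded derivatives of $\Exp$ together with the trace term on $\calK_{m,M}$. The error must also be measured against a fixed reference, such as the exact solution's logarithm. I would add the assumption that both endpoints lie within $\calO(h^{k+1})$ of that reference. The order then follows from the triangle inequality along the convex log segment, which is exactly what Step 2 uses. The floating-point guard $\epsilon_{\rm eval}$ should also be taken $o(h^{2k+2})$ so that it does not spoil inactivity.
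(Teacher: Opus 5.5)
Your Step 1 and the inactive branch of Step 2 follow the paper's proof. You Taylor-expand $J$ at $\theta=0$, use the first-variation hypothesis for $J'(0)$, and bound $J''\le C\norm{\delta\Psi}_{\QQ_h}^2$ uniformly on the spectral set. This gives $J(1)-J(0)=\calO(h^{2k+2})$, hence $\theta_\star=1$ once the budget constant dominates.

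The real difference is in the marginal case. The paper asserts $1-\theta_\star=\calO(h^{k+1})$; its appendix obtains this only under the extra assumption that $g'(1)$ is bounded below away from zero. You make no claim about $\theta_\star$ itself. You only use that the accepted logarithm lies within $(1-\theta_\star)\norm{\delta\Psi}_{\QQ_h}\le\norm{\delta\Psi}_{\QQ_h}=\calO(h^{k+1})$ of the raw one.

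The paper's version would give more, namely an $\calO(h^{2k+2})$ log displacement, but the hypotheses do not support it:
\begin{enumerate}[label=\textup{(\roman*)},leftmargin=*]
\item Your own bounds give $|J'(\theta)|=\calO(h^{2k+2})$ uniformly on $[0,1]$, so $g'(1)$ cannot stay bounded below independently of $h$.
\item Take the equilibrium state $\widehat\Psi_q=0$, where $D\Phi(\Id)=0$, and $\delta\Psi_q=h^{k+1}E_q$. Then $J(\theta)-J(0)=\tfrac12\theta^2h^{2k+2}\sum_qw_q\norm{E_q}_F^2+\calO(h^{3k+3})$.
\item With a budget $\tau_h=C_\tau h^{2k+2}$ and $C_\tau<\tfrac12\sum_qw_q\norm{E_q}_F^2$, this gives $\theta_\star\to\big(2C_\tau/\sum_qw_q\norm{E_q}_F^2\big)^{1/2}<1$, so $1-\theta_\star$ does not tend to zero.
\end{enumerate}
Your weaker formulation is therefore the one that actually follows from the hypotheses, and it is enough for ``same log-relative order.''

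Two minor points. First, the least-damping proposition is not needed for the order claim. Second, your added accuracy assumption need only be imposed on one endpoint, because $\norm{\widetilde\Psi-\widehat\Psi}_{\QQ_h}=\calO(h^{k+1})$ transfers it to the other by the triangle inequality.
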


\begin{proof}
Taylor expand $J(\theta)$ at $\theta=0$.  The stated first-variation condition makes the linear entropy excess no larger than the budget scale, and the second derivative is uniformly bounded on the spectral set.  Hence $J(1)-J(0)=\calO(h^{2k+2})$.  If the budget dominates this constant, $\theta_\star=1$; otherwise the bisection point satisfies $1-\theta_\star=\calO(h^{k+1})$, which does not change the formal logarithmic order.
\end{proof}

\section{Coupled energy estimate and accuracy}

Consider a time step in which the accepted tensor $\AA_q^{n+1}$ from the previous section is used consistently in the momentum stress, stretching term, entropy variable, and quadrature evaluation.  Define
\[
  \EE_h^{n}
  =
  \frac12\norm{\uu_h^n}_{L^2}^2
  +\frac{1-\beta}{2\Wi}\sum_qw_q\Phi(\AA_q^n).
\]

\begin{theorem}
For an entropy-compatible reconstruction, any admissible backward-Euler coupled update satisfies
\[
  \EE_h^{n+1}-\EE_h^n+\dt\,\calD_h^{n+1}
  \le \dt\,(\bm f^{n+1},\uu_h^{n+1})
  +\frac{1-\beta}{2\Wi}\tau_h ,
\]
where $\calD_h^{n+1}$ contains solvent, relaxation, and diffusion dissipations.
\end{theorem}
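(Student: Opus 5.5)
The approach is the standard discrete energy method: test the backward-Euler momentum update by $\uu_h^{n+1}$ and the conformation update by the entropy variable $\HH(\AA_q^{n+1})=\Id-(\AA_q^{n+1})^{-1}$ at each quadrature point. Then use convexity of $\Phi$ to convert the discrete time differences into entropy increments, and finally charge the reconstruction step to the budget \eqref{eq:budget}.

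First, the momentum equation. Testing by $\uu_h^{n+1}$ gives
\[
\tfrac12\norm{\uu_h^{n+1}}^2-\tfrac12\norm{\uu_h^n}^2+\tfrac12\norm{\uu_h^{n+1}-\uu_h^n}^2+\dt\,\beta\norm{\grad\uu_h^{n+1}}^2 .
\]
This equals the forcing term $\dt(\bm f^{n+1},\uu_h^{n+1})$ minus the polymeric stress work $\dt\frac{1-\beta}{\Wi}\sum_q w_q(\AA_q^{n+1}-\Id):\GG_q^{n+1}$. Here we need the convective term to be skew-symmetric and the pressure term to vanish by discrete incompressibility; I will state this as part of what ``admissible'' means.

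Second, the conformation equation. I split the step into two stages: the reconstruction $\widehat\AA^{n+1}\mapsto\AA^{n+1}$, and the backward-Euler evolution from $\AA^n$ to the physical predictor $\widehat\AA^{n+1}$. Both stages are evaluated with the accepted tensor, so that the discrete operator is consistent. Testing the evolution stage by $\HH(\AA_q^{n+1})$ and using convexity,
\[
\Phi(\AA^n)\ge\Phi(\AA^{n+1})+\HH(\AA^{n+1}):(\AA^n-\AA^{n+1}),
\]
bounds the time difference from below by $\Phi(\AA_q^{n+1})-\Phi(\AA_q^n)$ pointwise. The stretching term, by the identity in the first proposition of Section 2, yields $(\AA_q^{n+1}-\Id):\GG_q^{n+1}$. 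Because the same accepted tensor appears in the momentum work (entropy compatibility, Definition above), the two coupling terms cancel exactly after multiplying by $\frac{1-\beta}{2\Wi}$ and summing. The relaxation term gives the nonnegative dissipation
\[
\tfrac{1}{\Wi}(\AA-\Id):(\Id-\AA^{-1})=\tfrac{1}{\Wi}\tr(\AA+\AA^{-1}-2\Id)\ge0 .
\]
The diffusion term, handled with a discretization that is monotone for the entropy variable, gives $\varepsilon\,\grad\AA:\grad\HH(\AA)\ge0$; this too I will include in the admissibility assumption. Together these form $\calD_h^{n+1}$.

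Third, the reconstruction stage. The budget \eqref{eq:budget} gives directly
\[
\sum_qw_q\Phi(\AA_q^{n+1})\le\sum_qw_q\Phi(\widehat\AA_q^{n+1})+\tau_h .
\]
The convexity estimate of the second step is applied to the predictor entropy. Summing the stages and multiplying by $\frac{1-\beta}{2\Wi}$ produces the extra term $\frac{1-\beta}{2\Wi}\tau_h$.

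The main obstacle is reconciling these two stages. The convexity inequality needs the entropy variable at the tensor whose entropy is being bounded, while the cancellation needs the accepted tensor in the stretching term. I would resolve this by assuming, as part of ``admissible'', that the predictor is the solution of the implicit step whose nonlinear terms are evaluated at the accepted tensor. The gap between $\HH(\widehat\AA)$ and $\HH(\AA^{n+1})$ is then absorbed into the budget, since the budget controls exactly $J(\theta_\star)-J(0)$. Failing that, I would state the estimate with the convexity step applied at $\AA^{n+1}$ and treat the reconstruction as part of the implicit solve, so that the only non-dissipative residual is the budget excess.
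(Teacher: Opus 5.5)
Your route is the paper's. Its proof is a three-sentence version of the same discrete free-energy calculation: algebraic cancellation of the polymeric work because one tensor enters every coupling term, with the budget as the only extra term. The step you flag as the main obstacle, however, is a genuine gap, and your proposed resolution does not close it.

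In your two-stage scheme the discrete time derivative is $\widehat{\AA}^{n+1}_q-\AA^n_q$, while the test function is $\HH(\AA^{n+1}_q)$. The convexity inequality you quote controls $\HH(\AA^{n+1}):(\AA^{n+1}-\AA^n)$, not the term you actually have. So your second step already claims $\Phi(\AA^{n+1}_q)-\Phi(\AA^n_q)$, and your third step then pays the budget a second time. The exact identity, with the relative entropy of Section 2, is
\[
  \HH(\AA^{n+1}):(\widehat{\AA}^{n+1}-\AA^n)
  =\Phi(\widehat{\AA}^{n+1})-\Phi(\AA^n)
  +\Phi(\AA^n\mid\AA^{n+1})-\Phi(\widehat{\AA}^{n+1}\mid\AA^{n+1}).
\]
After the budget is used, the right-hand side therefore still carries $\frac{1-\beta}{2\Wi}\sum_qw_q\big[\Phi(\widehat{\AA}^{n+1}_q\mid\AA^{n+1}_q)-\Phi(\AA^n_q\mid\AA^{n+1}_q)\big]$, which has no sign. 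The budget cannot absorb it. The budget bounds the entropy difference $J(\theta_\star)-J(0)$. Testing with $\HH(\AA^{n+1})$ instead produces the linearization $\sum_qw_q\HH(\AA^{n+1}_q):(\AA^{n+1}_q-\widehat{\AA}^{n+1}_q)$, which exceeds that difference by $\sum_qw_q\Phi(\widehat{\AA}^{n+1}_q\mid\AA^{n+1}_q)\ge0$. Convexity works against you here.

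This is a real failure, not a missing estimate. Take $\uu\equiv0$, $\tau_h=0$, and a scalar (or diagonal) state with predictor $\widehat a<1$ and accepted $a>1$ on the same entropy level. The correction selects exactly such an $a$ when the raw value lies beyond it on the logarithmic segment. The leftover is then $\frac{\dt}{\Wi}(a-1)(\widehat a^{-1}-a^{-1})+\calO\big((\dt/\Wi)^2\big)>0$, so $\EE_h^{n+1}-\EE_h^n+\dt\,\calD_h^{n+1}>0$ even though $\tau_h=0$. Your fallback, with the accepted tensor as the implicit unknown in every term including the time difference, does close. But then $\tau_h$ never enters, so it is a different scheme rather than an absorption of the mismatch.

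The bookkeeping that yields exactly the stated bound does the convexity step and the work cancellation at one and the same tensor. It charges the budget only where entropy values, not their linearizations, are compared. Concretely:

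\begin{enumerate}[label=\textup{(\arabic*)},leftmargin=*]
\item Let the implicit step determine $\widehat{\AA}^{n+1}$, with stress work, stretching, relaxation and diffusion all evaluated at $\widehat{\AA}^{n+1}$ and the previously accepted $\AA^n$ as old data.
\item Test with $\HH(\widehat{\AA}^{n+1})$. Your calculation then gives the budget-free inequality with $\frac{1-\beta}{2\Wi}\sum_qw_q\Phi(\widehat{\AA}^{n+1}_q)$ at the new level.
\item Apply the budget once to replace this by $\frac{1-\beta}{2\Wi}\sum_qw_q\Phi(\AA^{n+1}_q)$, at cost exactly $\frac{1-\beta}{2\Wi}\tau_h$.
\end{enumerate}

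The accepted tensor then enters through $\EE_h^{n+1}$ and as old data for the next step. The paper's proof asserts that $\tau_h$ is the only extra term without fixing this structure. You must build it into your definition of an admissible update, because in the mixed form you wrote down the estimate does not follow from the budget.
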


\begin{proof}
The proof is the discrete free-energy calculation.  The only additional term relative to the exact entropy-compatible case is the allowed reconstruction budget $\tau_h$.  Because the same accepted tensor is used in all coupling terms, the polymeric work cancellation is algebraic.
\end{proof}

Thus $\tau_h$ has a precise interpretation: it is not artificial viscosity and not a hidden time-step restriction.  It is the maximum elastic entropy that the reconstruction stage is allowed to add beyond the physical predictor.  Choosing $\tau_h=0$ gives a strict no-extra-entropy reconstruction; choosing $\tau_h=\calO(h^{2k+2})$ allows high-order admissible defects while still keeping the reconstruction error below the target accuracy.

\begin{theorem}
If a noncompatible scheme uses different positive tensors $\AA_m$ and $\AA_e$ in stress work and entropy/stretching terms, then its energy balance contains a defect bounded by
\[
  C\,\norm{\AA_m-\AA_e}_{\QQ_h}\norm{\grad\uu_h}_{\QQ_h}.
\]
\end{theorem}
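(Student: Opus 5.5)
The plan is to repeat the discrete free-energy calculation behind the compatible energy theorem, keeping the mismatched tensors separate, and then bound the one term that no longer cancels.

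\emph{Test functions.} We test the discrete momentum equation with $\uu_h^{n+1}$. We test the discrete conformation equation at each quadrature point with $\HH(\AA_e)=\Id-\AA_e^{-1}$, weighted by $\frac{1-\beta}{2\Wi}w_q$.

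\emph{Terms that behave as before.} The time-derivative terms produce $\EE_h^{n+1}-\EE_h^n$ plus nonnegative numerical dissipation. For the conformation part this uses the convexity of $\Phi$ through $\Phi(\AA^n)\ge\Phi(\AA^{n+1})+D\Phi(\AA^{n+1}):(\AA^n-\AA^{n+1})$, with $\AA_e$ playing the role of the entropy tensor. Solvent viscosity, relaxation (using $(\AA-\Id):(\Id-\AA^{-1})\ge0$) and diffusion produce $\dt\,\calD_h^{n+1}$ exactly as in the compatible case. The forcing term is $\dt(\bm f^{n+1},\uu_h^{n+1})$ unchanged, and any reconstruction budget is carried along as before.

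\emph{The coupling terms.} At each quadrature point the momentum side contributes $\frac{1-\beta}{\Wi}(\AA_m-\Id):\GG_q$, with $\GG_q=\grad\uu_h(x_q)$. The stretching side contributes $-\frac{1-\beta}{2\Wi}(\GG_q\AA_e+\AA_e\GG_q^T):(\Id-\AA_e^{-1})$. By the stretching identity in the first proposition of Section 2, this equals $-\frac{1-\beta}{\Wi}(\AA_e-\Id):\GG_q$. The traces are unaffected because $\tr\GG_q=0$ from the discrete divergence constraint; if the constraint holds only weakly, we instead take the quadrature to be exact for the pressure pairing. Summing gives the residual
\[
  \dt\,\frac{1-\beta}{\Wi}\sum_q w_q(\AA_{m,q}-\AA_{e,q}):\GG_q ,
\]
which is the discrete version of the work-defect proposition in Section 3.

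\emph{Bounding the residual.} We apply Cauchy--Schwarz pointwise in the Frobenius inner product and then in the weighted $\ell^2$ sum over $q$. With $\norm{\cdot}_{\QQ_h}$ denoting the discrete quadrature $L^2$ norm, this gives
\[
  \Big|\sum_q w_q(\AA_m-\AA_e):\GG\Big|\le\norm{\AA_m-\AA_e}_{\QQ_h}\norm{\grad\uu_h}_{\QQ_h}.
\]
Hence $C=\dt(1-\beta)/\Wi$, or $C=(1-\beta)/\Wi$ per unit time. The energy inequality then reads as in the compatible theorem plus this defect term, and the defect has no sign by the earlier proposition.

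\emph{Main obstacle.} The difficulty is bookkeeping rather than estimation. The stress work must be evaluated with the same quadrature $\QQ_h$ as the entropy, and the velocity gradient at quadrature points must be the one appearing in both equations. Otherwise an additional quadrature-consistency error enters alongside the $\AA_m-\AA_e$ defect. I would state the result under the assumption that both coupling terms are assembled with $\QQ_h$ and the same $\GG_q$. Any variational-crime contribution would then be absorbed into $C$ or treated as a separate term.
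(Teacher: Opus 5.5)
Your proposal is correct and follows the paper's argument. The only uncancelled coupling term is the quadrature work defect $\frac{1-\beta}{\Wi}\sum_q w_q(\AA_{m,q}-\AA_{e,q}):\GG_q$, and Cauchy--Schwarz in the positive-weight quadrature sum bounds it, with your explicit $C=\dt(1-\beta)/\Wi$; you just spell out the free-energy bookkeeping and quadrature-consistency assumptions that the paper's two-line proof leaves implicit. Your appeal to $\tr\GG_q=0$ is harmless but unnecessary, since $\tfrac12(\GG\AA+\AA\GG^T):\AA^{-1}=\tr\GG$ for every $\GG$, so the stretching identity and the cancellation of the $\Id:\GG$ terms hold without the divergence constraint.
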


\begin{proof}
The uncancelled term is the quadrature version of the work defect identified above.  Cauchy's inequality gives the bound.
\end{proof}

\begin{theorem}
Assume a compact spectral set, bounded quadrature weights, and a high-order logarithmic reconstruction defect satisfying a first-variation consistency condition.  With a mesh-scaled budget $\tau_h=\calO(h^{2k+2})$, the corrected reconstruction is asymptotically inactive: $\theta_\star=1$ for sufficiently small $h$, or $1-\theta_\star=\calO(h^{k+1})$ in the marginal case.  The corrected state preserves the formal log-relative accuracy of the raw high-order reconstruction.
\end{theorem}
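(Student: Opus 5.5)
The plan is to reduce the statement to the scalar function $J(\theta)=\sum_q w_q\Phi(\AA_q(\theta))$ on $[0,1]$ and combine three earlier results: the convexity of $J$ from the proposition on the maximal admissible parameter, the spectral enclosure along the logarithmic path, and the Taylor argument of the inactivity proposition. Write $\delta\Psi_q=\widetilde\Psi_q-\widehat\Psi_q$. The compact spectral set assumption and the enclosure proposition keep every $\AA_q(\theta)$ in a fixed $\calK_{m,M}$, uniformly in $h$ and $\theta$.

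\textbf{Step 1 (second derivative).} By the chain rule, $J''(\theta)$ is a quadrature sum of the second derivative of $\Phi\circ\Exp$ applied to $[\delta\Psi_q,\delta\Psi_q]$. On $\calK_{m,M}$ the Fréchet derivatives of $\Exp$ up to second order are bounded; the integral formula in the amplification lemma gives this, with constants depending on $M=e^{\beta}$. The Hessian $D^2\Phi$ is bounded there as well. Using the bounded weights, this gives
\[
0\le J''(\theta)\le C_{m,M}\norm{\delta\Psi}_{\QQ_h}^2=\calO(h^{2k+2}).
\]
The nonnegativity comes from the convexity proposition.

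\textbf{Step 2 (first variation and endpoint excess).} Note that $J'(0)=\sum_q w_q\,D\Phi(\widehat\AA_q):D\Exp_{\widehat\Psi_q}\delta\Psi_q$. This is exactly the quantity assumed to be $\calO(h^{2k+2})$ in the first-variation consistency condition. Taylor's theorem then gives
\[
J(\theta)-J(0)\le \theta|J'(0)|+\tfrac12\theta^2\sup J''\le C_1h^{2k+2}
\]
uniformly in $\theta\in[0,1]$, with $C_1$ depending only on the consistency constant, $C_{m,M}$, the defect constant and the weight bound.

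\textbf{Step 3 (case split on the budget).} Write $\tau_h=C_\tau h^{2k+2}$.
\begin{itemize}
\item If $C_\tau\ge C_1$, then $J(1)\le J(0)+\tau_h$, so step (2) of the algorithm accepts the raw reconstruction and $\theta_\star=1$.
\item In the marginal case, convexity makes the admissible set an interval $[0,\theta_\star]$. The aim is to bound $1-\theta_\star$. The crude route is simply to note $1-\theta_\star\le1$. The distance from the accepted state to the raw one is then $(1-\theta_\star)\norm{\delta\Psi}_{\QQ_h}=\calO(h^{k+1})$, by the least-damping proposition.
\item For the sharper claim $1-\theta_\star=\calO(h^{k+1})$, I would use the bound of Step 2 on $J(\theta_\star)-J(0)=\tau_h$ together with convexity, $J(1)\ge J(\theta_\star)+(1-\theta_\star)J'(\theta_\star)$. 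This requires a lower bound on $J'(\theta_\star)$ in terms of $h^{k+1}$.
\end{itemize}

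\textbf{Step 4 (accuracy).} By the triangle inequality, the log error of the accepted state is at most the raw error plus $(1-\theta_\star)\norm{\delta\Psi}_{\QQ_h}$. Both terms are $\calO(h^{k+1})$ whenever the physical predictor is itself $\calO(h^{k+1})$-close in log variables. The relative-entropy equivalence on $\calK_{m,M}$ transfers this to the entropy metric.

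\textbf{Main obstacle.} The hard part is the marginal-case rate in Step 3. Convexity alone yields only $1-\theta_\star\le 1$. In that case the preserved order comes entirely from the smallness of $\norm{\delta\Psi}_{\QQ_h}$, not from $\theta_\star$. I would therefore state the result in that form: the accepted state is within $\calO(h^{k+1})$ of the raw one in log variables. I would claim the stronger rate on $1-\theta_\star$ only under an extra nondegeneracy assumption $J''\ge c\,h^{2k+2}$, which follows from the Hessian lower bound on $\calK_{m,M}$ when $\norm{\delta\Psi}_{\QQ_h}\sim h^{k+1}$.
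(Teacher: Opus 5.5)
The parts of your proposal that handle the inactive case and the accuracy claim are correct and follow the paper's argument. Expanding $J$ at $\theta=0$, using the consistency condition for $J'(0)$, and using $0\le J''\le C\norm{\delta\Psi}_{\QQ_h}^2$ gives $J(1)-J(0)=\calO(h^{2k+2})$, hence $\theta_\star=1$ once $C_\tau\ge C_1$. Your observation that the log-relative order is preserved for \emph{every} $\theta_\star\in[0,1]$ is also right, since $(1-\theta_\star)\norm{\delta\Psi}_{\QQ_h}\le\norm{\delta\Psi}_{\QQ_h}=\calO(h^{k+1})$. That is all the last sentence of the theorem needs.

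The genuine gap is the marginal case, and your proposed rescue fails. Under the hypotheses, $J(\theta)-J(0)$, $J'$, $J''$ and $\tau_h$ all scale like $h^{2k+2}$. Your own Steps 1--2 give $|J'(\theta)|\le|J'(0)|+\sup J''\le Ch^{2k+2}$ on all of $[0,1]$. So the lower bound $J'(\theta_\star)\gtrsim h^{k+1}$ needed in your third bullet is impossible, and adding $J''\ge c\,h^{2k+2}$ does not help. After dividing by $h^{2k+2}$, the selection of $\theta_\star$ is an $h$-independent problem.

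Here is a concrete counterexample. Take $d=1$, $\widehat\Psi_q=0$ and $\delta\Psi_q=h^{k+1}\eta_q$, with $c:=\tfrac12\sum_qw_q\eta_q^2$ bounded away from zero. Then $J'(0)=0$ exactly, and $J(\theta)=c\,\theta^2h^{2k+2}+\calO(h^{3k+3})$, so your nondegeneracy condition holds. For $C_\tau<c$ one gets $\theta_\star\to\sqrt{C_\tau/c}\in(0,1)$, so $1-\theta_\star\not\to0$.

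What convexity actually gives when $\theta_\star<1$ and $\tau_h>0$ (so that $J(\theta_\star)=J(0)+\tau_h$) is the secant-slope bound
\[
  1-\theta_\star\le\frac{J(1)-J(0)-\tau_h}{\tau_h}.
\]
This is $\calO(h^{k+1})$ only when the raw entropy excess exceeds the budget by a relative $\calO(h^{k+1})$.

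The paper's proof does not close this gap either. Its appendix assumes $g'(1)$ is bounded away from zero, which is incompatible with $g'(1)=\calO(h^{2k+2})$. Moreover, a lower bound on $J'(1)$ says nothing about $J'(\theta_\star)\le J'(1)$, which is what an upper bound on $1-\theta_\star$ requires.

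So drop the conditional rate claim. What you can actually prove is the dichotomy $\theta_\star=1$ for $C_\tau\ge C_1$, together with preservation of the log-relative order through $\norm{\delta\Psi}_{\QQ_h}$ alone, whatever the value of $\theta_\star$.
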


\begin{proof}
Expand $J(\theta)$ around $\theta=0$.  The first variation vanishes or is of higher order by the consistency condition; the second variation is bounded on the compact spectral set.  Hence the entropy excess of the raw high-order defect is of the same order as the budget, so the correction becomes inactive or asymptotically small.
\end{proof}

\begin{theorem}
Let $\AA=\Exp\Psi$ with smooth $\Psi$ on a cell and $\lambda_{\max}(\AA)\le\Lambda$.  A direct reconstruction of $\AA$ sees derivatives that scale like $\Lambda$ in high-stretch layers, while logarithmic reconstruction resolves the smooth variable $\Psi$.  The corrected log reconstruction retains the log-relative order and satisfies the entropy budget.  Thus, in log-relative and entropy-compatible metrics, the method has a conditional high-stretch resolution advantage.
\end{theorem}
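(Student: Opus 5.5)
The plan is to split the statement into three parts, handled in order: a derivative-scaling estimate for direct reconstruction of $\AA$; a log-relative accuracy estimate for reconstruction of $\Psi$; and an appeal to the earlier correction results to show that the corrected state keeps that order and satisfies the budget. Throughout, ``high-stretch layer'' is taken to mean a cell on which $\Psi$ has derivatives bounded independently of $\Lambda$, say $\norm{\nabla^j\Psi}\le C_j$ for $j\le k+1$, while $\lambda_{\max}(\Exp\Psi)$ may be as large as $\Lambda$. This is the conditional hypothesis that the word ``conditional'' in the statement refers to.

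\textbf{Step 1 (direct reconstruction).} Differentiate $\AA=\Exp\Psi$ along a spatial direction using the Fr\'echet formula of the exponential lemma:
\[
\partial\AA=\int_0^1\Exp((1-s)\Psi)\,\partial\Psi\,\Exp(s\Psi)\dd s .
\]
Bounding each exponential factor by its largest eigenvalue gives $\norm{\partial\AA}\lesssim\Lambda\norm{\partial\Psi}$. Iterating with the chain rule (higher derivatives are multilinear integrals of the same type) gives
\[
\norm{\nabla^{k+1}\AA}\lesssim\Lambda\,C(\{C_j\}).
\]
To show that this scaling is actually attained, take the commuting case $\Psi=\psi(x)\,\mathrm{diag}(1,0,\dots)$ with $e^{\psi}\sim\Lambda$. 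There $\partial\AA=e^\psi\partial\psi$ exactly. So a standard $(k+1)$st-order interpolation error for $\RR_h\AA$ carries a constant proportional to $\Lambda$, even when measured relative to $\AA$. In log-relative terms, $\norm{\Log\RR_h\AA-\Psi}$ may not even be defined, because positivity can fail.

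\textbf{Step 2 (log reconstruction).} The interpolation estimate applied to $\Psi$ gives $\norm{\widetilde\Psi-\Psi}_{\QQ_h}\le Ch^{k+1}$ with $C$ independent of $\Lambda$. The spectral-bound proposition then confines all states to $\calK_{m,M}$ with $m=e^{\alpha}$ and $M=e^{\beta}\le\Lambda$. The relative-entropy equivalence on spectral sets converts the logarithmic error into an entropy error $\Phi(\widetilde\AA\mid\AA)\le C_{m,M}h^{2k+2}$. I expect the main obstacle here to be the growth of $C_{m,M}$ in $M$. The first thing I would try is to measure the error in the log-relative metric itself and use the entropy equivalence only for admissibility, so that $\Lambda$ enters only through constants in the entropy statement. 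This is where the advantage stops being unconditional, and I would state that dependence explicitly.

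\textbf{Step 3 (correction).} Apply the preceding asymptotic-inactivity theorem with predictor $\widehat\Psi$, under its first-variation consistency condition. This gives either $\theta_\star=1$ or $1-\theta_\star=\calO(h^{k+1})$. Let $\delta\Psi_q=\widetilde\Psi_q-\widehat\Psi_q$. By the least-damping proposition,
\[
\Log\AA_q(\theta_\star)-\Psi=(\widetilde\Psi-\Psi)-(1-\theta_\star)\,\delta\Psi .
\]
Taking norms, the error is at most $Ch^{k+1}+(1-\theta_\star)\norm{\delta\Psi}$. Since $\norm{\delta\Psi}=\calO(h^{k+1})$ already, the correction term is of the same or higher order, so the log-relative order is retained. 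The budget is satisfied by construction, via the convexity/bisection proposition, and the accepted tensor then enters the compatible energy estimate. Comparing Step 1, whose error constant is proportional to $\Lambda$, with Steps 2--3, whose constant is $\Lambda$-independent in log-relative terms, gives the claimed conditional advantage.
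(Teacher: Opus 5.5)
Your three steps follow the paper's proof, in more detail. The paper argues that log interpolation gives $\norm{\RR_h\Psi-\Psi}=\calO(h^{k+1})$, that direct approximation of $\Exp\Psi$ inherits exponential derivatives scaling with the stretch, and that the corrected state lies on the logarithmic segment and so cannot enlarge the log defect beyond the raw endpoint. Two points need correcting.

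\textbf{Step 1.} Your claim that the direct-interpolation constant is proportional to $\Lambda$ ``even when measured relative to $\AA$'' is false for your own example. Take $\Psi=\psi\,\mathrm{diag}(1,0,\dots)$ with $|\nabla^j\psi|\le C_j$.
\begin{itemize}
\item Every derivative of $e^\psi$ is $e^\psi$ times a polynomial in derivatives of $\psi$.
\item On a cell $K$ of diameter $h$, $\max_K e^\psi/\min_K e^\psi\le e^{C_1h}$.
\end{itemize}
So the relative interpolation error is $\calO(h^{k+1})$ with a $\Lambda$-independent constant, and positivity is not lost on resolved meshes. In a fixed eigenframe the factor $\Lambda$ appears only in absolute norms. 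That is exactly why the paper calls the advantage conditional: it compares an absolute error for direct reconstruction with a log-relative error for log reconstruction.

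A genuinely relative disadvantage needs a rotating eigenframe, e.g.\ $\AA=R(x)\,\mathrm{diag}(\Lambda,1)R(x)^T$. There the $\calO(\Lambda h^{k+1})$ error leaks into the unit eigendirection and can destroy positivity, while the derivatives of $\Psi$ grow only like $\log\Lambda$. Either drop the relative claim or replace your example with this one.

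Your worry in Step 2 about $C_{m,M}$ is also milder than you expect. $\Phi(\AA\mid\BB)=\Phi(\BB^{-1/2}\AA\BB^{-1/2})$ is congruence invariant, so in a fixed eigenframe it depends only on $\Log\AA-\Log\BB$. Growth of the upper constant with $M$ comes only from noncommutativity.

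\textbf{Step 3.} You do not need the asymptotic-inactivity theorem. Invoking it imports a first-variation consistency hypothesis that is not part of this statement.
\begin{itemize}
\item The identity $\Log\AA_q(\theta_\star)-\Psi=(\widetilde\Psi-\Psi)-(1-\theta_\star)\delta\Psi$ comes from the definition of the path, not from the least-damping proposition.
\item Since $0\le 1-\theta_\star\le 1$, it already gives the bound $Ch^{k+1}+\norm{\delta\Psi}$ for whatever $\theta_\star$ bisection returns. This is precisely the paper's ``stays on the segment'' argument.
\end{itemize}
The assumption actually doing the work is $\norm{\widetilde\Psi-\widehat\Psi}=\calO(h^{k+1})$, i.e.\ a predictor that is itself $\calO(h^{k+1})$-accurate. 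State it explicitly rather than inheriting it through the inactivity theorem.
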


\begin{proof}
Approximation in logarithmic variables gives $\norm{\RR_h\Psi-\Psi}=\calO(h^{k+1})$.  Direct approximation of $\AA=\Exp\Psi$ inherits derivatives of the exponential and therefore scales with the local stretch.  The correction stays on the same logarithmic segment and cannot enlarge the log defect beyond the raw endpoint.
\end{proof}

The resolution statement is formulated in the norm in which the logarithmic variable is the resolved smooth object.  In absolute stress-force norms, both physical-space and logarithmic reconstructions can still feel the factor $\lambda_{\max}(\AA)$ after mapping back to $\AA$.  The advantage proved here is more precise: if the smooth object is $\Psi=\Log\AA$, then the corrected log reconstruction keeps the approximation high order in log-relative and entropy-compatible metrics, and it prevents the entropy-incompatible part of the mapped stress force from entering the coupled work balance.

To see the distinction, write $\AA=\Exp\Psi$.  A direct reconstruction of $\AA$ must approximate derivatives of $\Exp\Psi$, and the chain rule introduces factors of the form
\[
  D\Exp_{\Psi}[\partial_i\Psi]
  =
  \int_0^1 e^{(1-s)\Psi}(\partial_i\Psi)e^{s\Psi}\,\dd s .
\]
This term is bounded by $\lambda_{\max}(\AA)\norm{\partial_i\Psi}$.  In contrast, a logarithmic reconstruction approximates $\Psi$ directly.  Mapping back to $\AA$ still amplifies absolute physical-space errors, but the entropy-compatible correction ensures that the amplified part cannot appear as an uncontrolled positive entropy increment or uncancelled work defect.  This is exactly the high-$\Wi$ regime in which positivity alone is too weak a diagnostic.

\subsection{Production diagnostics}

For actual high-Weissenberg calculations, the correction is monitored through a small set of scalar diagnostics.  The first is the minimum eigenvalue, which detects loss of admissibility.  The second is the reconstruction entropy excess
\[
  \Delta_\Phi
  =
  \sum_qw_q\Phi(\AA_q^{\rm accepted})
  -
  \sum_qw_q\Phi(\widehat\AA_q),
\]
which measures how much elastic entropy the reconstruction stage adds beyond the physical predictor.  The third is the coupling defect
\[
  \mathcal W_h
  =
  \frac{1-\beta}{\Wi}
  \sum_qw_q(\AA_{m,q}-\AA_{e,q}):\nabla\uu_q ,
\]
which should vanish when the same accepted tensor is used consistently.

These quantities separate three different failure modes.  If $\lambda_{\min}$ becomes nonpositive, the conformation model itself has left its admissible set.  If $\Delta_\Phi$ is positive and large while eigenvalues remain positive, the reconstruction is adding artificial elastic energy.  If $\mathcal W_h$ is nonzero, the implementation is using incompatible tensors in the stress and entropy terms.  This diagnostic separation is one practical advantage of the present formulation: it tells the user whether the problem is positivity, entropy bias, or coupling mismatch.

\section{Numerical diagnostics}

The numerical section is organized to isolate the mechanisms above.  It includes scalar diagnostics, the matrix amplification test, the sign-defect test, a coupled velocity--stress manufactured diagnostic, a dynamic high-Weissenberg loop, and the mesh-and-stretch resolution benchmark.  These are the tests most directly connected to the analysis.

\begin{table}[htbp]
\centering
\caption{Diagnostics supporting the reconstruction theory.}
\label{tab:diagnostics}
{\small
\setlength{\tabcolsep}{3pt}
\begin{tabular}{p{0.24\linewidth}p{0.37\linewidth}p{0.25\linewidth}}
\toprule
Diagnostic & Purpose & Main observation \\
\midrule
Scalar log bias & Jensen entropy bias & positive entropy excess \\
Log/sqrt/linear comparison & positivity versus entropy bias & positivity alone is insufficient \\
Matrix exponential amplification & stretch-dependent perturbation growth & force defect grows with stretch \\
Work-defect sign & mismatch of stress and entropy tensors & defect has no fixed sign \\
Coupled manufactured test & stress-force defect in momentum equation & correction removes incompatible defect \\
Dynamic high-$\Wi$ loop & accumulation of positive log defects & raw log accumulates; corrected log controls \\
Resolution sweep & mesh and stretch dependence & corrected log controls entropy-compatible defect \\
\bottomrule
\end{tabular}
}
\end{table}

\subsection{Scalar and pointwise mechanisms}

The scalar log diagnostic uses $\psi(x)=\psi_0+\alpha\sin(2\pi x)$.  The perturbation has zero mean in logarithmic variables, but exponentiation creates a positive mean bias in $\AA$ and an entropy defect.  This is the simplest example showing that positivity-preserving nonlinear maps can create artificial elastic energy.

\begin{table}[htbp]
\centering
\caption{Scalar diagnostic for log reconstruction.  A zero-mean logarithmic fluctuation $\psi_0+\alpha\sin(2\pi x)$ is exponentiated with $\psi_0=3$.  Positivity is automatic, but the mean conformation and entropy are biased upward.}
\label{tab:scalar-log-bias}
\begin{tabular}{ccc}
\toprule
$\alpha$ & relative mean bias & entropy defect \\
\midrule
0.050000 & 6.250977E-04 & 1.255542E-02 \\
0.100000 & 2.501563E-03 & 5.024523E-02 \\
0.200000 & 1.002503E-02 & 2.013581E-01 \\
0.400000 & 4.040178E-02 & 8.114915E-01 \\
0.600000 & 9.204536E-02 & 1.848781E+00 \\
\bottomrule
\end{tabular}
\end{table}

The next diagnostic compares logarithmic, square-root, and linear reconstructions in a scalar setting.  Linear reconstruction has the smallest entropy defect while it stays positive, but it has no cone protection.  Logarithmic and square-root reconstructions preserve positivity for all tested amplitudes, but both introduce nonlinear entropy bias.

\begin{table}[htbp]
\centering
\scriptsize
\caption{Comparison of scalar positivity reconstructions around $A_0=20$.  Log reconstruction uses $A=A_0\exp(\alpha q)$, square-root reconstruction uses $A=A_0(1+\alpha q)^2$, and linear reconstruction uses $A=A_0(1+\alpha q)$ with $q=\sin(2\pi x)$.}
\label{tab:scalar-reconstruction-comparison}
\begin{tabular}{ccccc}
\toprule
$\alpha$ & log $\Delta\Phi$ & sqrt $\Delta\Phi$ & linear $\Delta\Phi$ & linear min \\
\midrule
0.050000 & 1.250195E-02 & 2.625117E-02 & 6.255868E-04 & 1.900000E+01 \\
0.100000 & 5.003126E-02 & 1.050189E-01 & 2.509427E-03 & 1.800000E+01 \\
0.200000 & 2.005006E-01 & 4.203068E-01 & 1.015342E-02 & 1.600000E+01 \\
0.400000 & 8.080356E-01 & 1.685277E+00 & 4.263868E-02 & 1.200000E+01 \\
0.600000 & 1.840907E+00 & 3.810721E+00 & 1.053605E-01 & 8.000004E+00 \\
\bottomrule
\end{tabular}
\end{table}

The matrix amplification diagnostic then fixes a small logarithmic perturbation and increases the background stretch.  The measured amplification grows proportionally to the largest conformation eigenvalue, as predicted by the Frechet derivative of the exponential map.

\begin{table}[htbp]
\centering
\caption{Amplification of a fixed symmetric logarithmic perturbation under the matrix exponential.  The diagnostic reports $\|\exp(\Psi+\varepsilon E)-\exp(\Psi)\|_F/(\varepsilon\|E\|_F)$ with $\varepsilon=10^{-5}$ and $\Psi=\operatorname{diag}(\log\lambda_{\max},0)$.}
\label{tab:matrix-exp-amplification}
\begin{tabular}{cc}
\toprule
$\lambda_{\max}(A)$ & amplification factor \\
\midrule
1.000000E+00 & 1.000003E+00 \\
1.000000E+01 & 7.539628E+00 \\
1.000000E+02 & 7.214908E+01 \\
1.000000E+03 & 7.136519E+02 \\
\bottomrule
\end{tabular}
\end{table}

Finally, the work-defect table verifies that two positive reconstructed tensors can produce opposite signs in the energy balance when the stress tensor and entropy tensor are not the same.

\begin{table}[htbp]
\centering
\caption{Sign-indefinite polymeric-work defects caused by using mismatched conformation reconstructions in the stress work and entropy variables.  The diagnostic uses the extensional gradient $\operatorname{diag}(1,-1)$; positive and negative log biases generate opposite signs.}
\label{tab:work-defect}
\begin{tabular}{ccc}
\toprule
$\alpha$ & positive defect & negative defect \\
\midrule
0.050000 & 3.567259E-03 & -3.565347E-03 \\
0.100000 & 1.430304E-02 & -1.427237E-02 \\
0.200000 & 5.776083E-02 & -5.726549E-02 \\
0.400000 & 2.401299E-01 & -2.318970E-01 \\
\bottomrule
\end{tabular}
\end{table}

\subsection{Coupled and dynamic diagnostics}

The coupled manufactured diagnostic uses a divergence-free periodic velocity and a smooth positive conformation field with prescribed stretch $\Lambda$.  Log and square-root reconstructions are positive, but their stress-force defects grow with stretch.  The corrected log reconstruction nearly removes the incompatible force component because the accepted tensor is tied to the entropy budget.

\begin{figure}[htbp]
\centering
\begin{tikzpicture}
\begin{groupplot}[
  group style={group size=2 by 1, horizontal sep=1.2cm},
  width=0.47\textwidth,
  height=0.35\textwidth,
  xmode=log,
  ymode=log,
  grid=both,
  xlabel={stretch $\Lambda$},
  legend style={at={(0.5,1.16)}, anchor=south, font=\scriptsize, draw=none, fill=none, /tikz/every even column/.append style={column sep=0.45em}},
  every axis plot/.append style={thick, mark=*}
]
\nextgroupplot[ylabel={entropy defect}]
\addplot table[x=stretch,y=log_entropy_defect,col sep=comma]{data/coupled_flow_diagnostic.csv};
\addlegendentry{log}
\addplot table[x=stretch,y=sqrt_entropy_defect,col sep=comma]{data/coupled_flow_diagnostic.csv};
\addlegendentry{sqrt}
\nextgroupplot[ylabel={stress-force defect}]
\addplot table[x=stretch,y=log_force_defect,col sep=comma]{data/coupled_flow_diagnostic.csv};
\addlegendentry{log}
\addplot table[x=stretch,y=sqrt_force_defect,col sep=comma]{data/coupled_flow_diagnostic.csv};
\addlegendentry{sqrt}
\addplot table[x=stretch,y=corrected_force_defect,col sep=comma]{data/coupled_flow_diagnostic.csv};
\addlegendentry{corrected}
\end{groupplot}
\end{tikzpicture}
\caption{Coupled manufactured diagnostic.  Positivity-preserving reconstructions generate entropy and stress-force defects that increase with stretch.  The corrected log reconstruction damps the incompatible defect before it enters the momentum equation.}
\label{fig:coupled-flow-diagnostic}
\end{figure}

\begin{table}[htbp]
\centering
\tiny
\caption{Coupled periodic velocity--stress diagnostic.  The exact velocity is divergence free and the exact conformation field is positive.  The table reports entropy defects, stress-force defects $\|\nabla\cdot(A_{\rm rec}-A_{\rm phys})\|_2$, and the damping factor selected by the entropy-compatible correction.}
\label{tab:coupled-flow-diagnostic}
\begin{tabular}{ccccccc}
\toprule
$\Lambda$ & $\theta$ & $\Delta\Phi_{\log}$ & $\Delta\Phi_{\sqrt{\ }}$ & $F_{\log}$ & $F_{\sqrt{\ }}$ & $F_{\rm corr}$ \\
\midrule
1.000000E+01 & 0.000488 & 3.341465E-06 & 1.439163E-05 & 2.782671E-02 & 5.709695E-02 & 1.358726E-05 \\
1.000000E+02 & 0.000122 & 2.963373E-05 & 1.328310E-04 & 2.744377E-01 & 5.482625E-01 & 3.350068E-05 \\
1.000000E+03 & 0.000061 & 2.845167E-04 & 1.303056E-03 & 2.736774E+00 & 5.446792E+00 & 1.670394E-04 \\
\bottomrule
\end{tabular}
\end{table}

The entropy-correction diagnostic follows a smooth physical predictor plus a high-frequency reconstruction defect.  Standard log reconstruction remains positive but accumulates entropy from the defect, whereas the corrected update returns to the predictor entropy within the prescribed budget.

\begin{figure}[htbp]
\centering
\begin{tikzpicture}
\begin{axis}[
  width=0.78\textwidth,
  height=0.42\textwidth,
  grid=both,
  xlabel={$t$},
  ylabel={mean entropy},
  legend columns=-1,
  legend style={at={(0.5,1.16)}, anchor=south, font=\scriptsize, draw=none, fill=none, /tikz/every even column/.append style={column sep=0.45em}},
  every axis plot/.append style={thick}
]
\addplot table[x=t,y=standard_entropy,col sep=comma]{data/entropy_correction_history.csv};
\addlegendentry{standard log}
\addplot table[x=t,y=base_entropy,col sep=comma]{data/entropy_correction_history.csv};
\addlegendentry{physical predictor}
\addplot table[x=t,y=corrected_entropy,col sep=comma]{data/entropy_correction_history.csv};
\addlegendentry{corrected}
\end{axis}
\end{tikzpicture}
\caption{Entropy correction diagnostic.  Positivity is preserved in all logarithmic updates, but the uncorrected reconstruction accumulates additional entropy from the high-frequency defect.}
\label{fig:entropy-correction}
\end{figure}

\begin{table}[htbp]
\centering
\scriptsize
\caption{Toy entropy-correction diagnostic at $T=1$.  The standard log reconstruction remains positive but accumulates entropy from a high-frequency reconstruction defect.  The corrected update damps only the defect relative to the physical logarithmic predictor.}
\label{tab:entropy-correction-summary}
\begin{tabular}{cccc}
\toprule
standard $\Phi$ & predictor $\Phi$ & corrected $\Phi$ & mean $\theta$ \\
\midrule
7.914403E-02 & 7.706680E-02 & 7.706680E-02 & 0.000999 \\
\bottomrule
\end{tabular}
\end{table}

The dynamic high-Weissenberg loop tests accumulation over many steps at $\Wi=80$ and background stretch $\Lambda=500$.  The standard logarithmic reconstruction remains positive throughout, but the entropy excess and stress-force defect accumulate.  The corrected reconstruction damps only the incompatible part and reduces the force defect by orders of magnitude.

\begin{figure}[htbp]
\centering
\begin{tikzpicture}
\begin{groupplot}[
  group style={group size=2 by 1, horizontal sep=1.2cm},
  width=0.47\textwidth,
  height=0.35\textwidth,
  grid=both,
  xlabel={$t$},
  legend style={at={(0.5,1.16)}, anchor=south, font=\scriptsize, draw=none, fill=none, /tikz/every even column/.append style={column sep=0.45em}},
  every axis plot/.append style={thick}
]
\nextgroupplot[ylabel={entropy excess}]
\addplot table[x=t,y expr=\thisrow{standard_entropy}-\thisrow{reference_entropy},col sep=comma]{data/dynamic_high_wi_history.csv};
\addlegendentry{standard log}
\addplot table[x=t,y expr=\thisrow{corrected_entropy}-\thisrow{reference_entropy},col sep=comma]{data/dynamic_high_wi_history.csv};
\addlegendentry{corrected}
\nextgroupplot[ylabel={stress-force defect}, ymode=log]
\addplot+[restrict x to domain=0.000001:1]
table[x=t,y=standard_force_defect,col sep=comma]{data/dynamic_high_wi_history.csv};
\addlegendentry{standard log}
\addplot+[restrict x to domain=0.000001:1]
table[x=t,y=corrected_force_defect,col sep=comma]{data/dynamic_high_wi_history.csv};
\addlegendentry{corrected}
\end{groupplot}
\end{tikzpicture}
\caption{Dynamic high-Weissenberg reconstruction loop.  Positive standard log reconstruction accumulates entropy and stress-force defects, while the corrected log reconstruction damps the incompatible defect at each step.}
\label{fig:dynamic-high-wi}
\end{figure}

\begin{table}[htbp]
\centering
\caption{Dynamic high-Weissenberg reconstruction loop at $\Wi=80$ and background stretch $\Lambda=500$.  The standard logarithmic reconstruction remains positive but accumulates extra entropy and stress-force defects over many steps.  The corrected reconstruction damps only the entropy-incompatible defect.}
\label{tab:dynamic-high-wi-summary}
\begin{tabular}{ccccc}
\toprule
final $\Delta\Phi_{\rm std}$ & final $\Delta\Phi_{\rm corr}$ & max force std & max force corr & mean $\theta$ \\
\midrule
5.499871E-03 & -2.314380E-08 & 1.029331E+01 & 1.782488E-02 & 0.000198 \\
\bottomrule
\end{tabular}
\end{table}

\subsection{Resolution benchmark}

The final benchmark sweeps mesh size and stretch.  Direct physical-space reconstruction and raw log reconstruction both produce stress-force defects that grow with stretch unless the mesh is refined.  The corrected log reconstruction keeps the entropy-incompatible component close to the tolerance, while retaining the log-relative order on the resolved meshes.

\begin{table}[htbp]
\centering
\scriptsize
\caption{High-Weissenberg resolution benchmark at background stretch $\Lambda=500$.  The direct physical-space perturbation, raw log reconstruction, and corrected log reconstruction are compared over a mesh sweep.}
\label{tab:resolution-benchmark}
\begin{tabular}{cccccc}
\toprule
$N$ & direct min eig & force direct & force raw log & force corrected & $\theta$ \\
\midrule
16 & 9.147438E-01 & 2.053052E+00 & 2.080376E+00 & 3.174399E-05 & 0.000015 \\
24 & 9.138735E-01 & 1.187094E+00 & 1.202579E+00 & 7.339958E-05 & 0.000061 \\
32 & 9.135890E-01 & 7.274251E-01 & 7.369323E-01 & 8.995755E-05 & 0.000122 \\
48 & 9.133865E-01 & 3.430852E-01 & 3.475849E-01 & 8.485959E-05 & 0.000244 \\
64 & 9.133121E-01 & 1.969756E-01 & 1.995631E-01 & 9.744294E-05 & 0.000488 \\
\bottomrule
\end{tabular}
\end{table}

\begin{table}[htbp]
\centering
\scriptsize
\caption{Stretch sweep for the resolution benchmark on the $N=32$ grid.  Force defects grow with stretch for raw positive reconstruction, while the entropy-compatible correction keeps the incompatible component small.}
\label{tab:stretch-benchmark}
\begin{tabular}{ccccc}
\toprule
$\Lambda$ & force direct & force raw log & force corrected & entropy raw log \\
\midrule
1.000000E+02 & 1.455880E-01 & 1.476736E-01 & 3.605312E-05 & 9.100551E-06 \\
5.000000E+02 & 7.274251E-01 & 7.369323E-01 & 8.995755E-05 & 4.424172E-05 \\
1.000000E+03 & 1.454726E+00 & 1.473162E+00 & 1.798294E-04 & 8.780272E-05 \\
\bottomrule
\end{tabular}
\end{table}

The diagnostics support the main message of the analysis.  Positivity is necessary but not sufficient; the tensor entering the stress force must also be compatible with the tensor entering the entropy calculation.  The corrected log reconstruction supplies that compatibility while remaining inactive when the raw high-order reconstruction is already admissible.

\section{Implications for high-Weissenberg schemes}

The analysis yields several implementation rules.  First, eigenvalue positivity is monitored together with entropy and stress-work defects.  Second, the same accepted tensor is used in stress force, stretching, entropy variables, and quadrature.  Third, endpoint logarithmic spectral bounds give a practical way to control every state on the corrected path.  Fourth, a cellwise budget $\tau_K$ proportional to the local high-order truncation scale avoids global overdamping.  Finally, the largest admissible parameter is used so that the correction does not become unnecessarily dissipative.

For high-Weissenberg computations, the most useful practical diagnostic is not only $\min\lambda(\AA)>0$.  The reconstruction entropy excess and the stress-force defect must also be tracked.  A run can pass the positivity check and still contain nonphysical energy injection if the accepted tensor is not used consistently in the coupled work terms.

\section{Conclusion}

We have separated positivity preservation from entropy compatibility for high-Weissenberg conformation-tensor reconstruction.  Log and square-root variables keep $\AA$ in the positive definite cone, but positive tensors can still generate entropy bias, exponential amplification, and sign-indefinite work defects.  The corrected logarithmic reconstruction proposed here accepts the raw high-order log reconstruction only if it satisfies an elastic entropy budget; otherwise it selects the largest admissible point on the logarithmic path.  The method is positive, least damping, local, spectrally controlled, and compatible with a coupled velocity--pressure--conformation energy estimate.  The analysis proves asymptotic inactivity and conditional high-stretch resolution advantages, while the diagnostics verify that the correction removes nonphysical entropy and stress-force defects before they enter the momentum equation.

\clearpage
\appendix
\section{Scalar entropy bias}
The scalar elastic entropy is
\[
  \phi(a)=a-\log a-1,\qquad a>0.
\]
For a logarithmic perturbation $a=a_0\exp(\delta)$ with zero mean $\langle\delta\rangle=0$, Taylor expansion gives
\[
  \langle a\rangle=a_0\left(1+\frac12\langle\delta^2\rangle+\mathcal O(\|\delta\|_{L^\infty}^3)\right).
\]
The entropy mean satisfies
\[
  \langle\phi(a_0e^\delta)\rangle-\phi(a_0)
  =
  a_0\left(\frac12\langle\delta^2\rangle+\mathcal O(\|\delta\|_{L^\infty}^3)\right),
\]
because the logarithmic contribution is linear in $\delta$ and has zero mean.  Thus a strictly positive logarithmic reconstruction can increase elastic entropy even when the log perturbation has zero average.  This is a Jensen-type effect, not loss of positive definiteness.

For a square-root perturbation $a=a_0(1+\eta)^2$ with zero mean $\langle\eta\rangle=0$, one obtains
\[
  \langle a\rangle-a_0=a_0\langle\eta^2\rangle,
\]
and
\[
  \langle\phi(a_0(1+\eta)^2)\rangle-\phi(a_0)
  =
  a_0\langle\eta^2\rangle+\langle\eta^2\rangle+\mathcal O(\|\eta\|_{L^\infty}^3).
\]
Square-root reconstruction therefore has the same qualitative issue: positivity is preserved, but entropy can be added by the nonlinear map.

\section{Matrix exponential amplification}
For a symmetric logarithm $\Psi$ and perturbation $E$, the Frechet derivative of the exponential is
\[
  D\Exp_\Psi[E]=\int_0^1 e^{(1-s)\Psi}E e^{s\Psi}\,\dd s .
\]
Therefore
\[
  \|D\Exp_\Psi[E]\|_F
  \le e^{\lambda_{\max}(\Psi)}\|E\|_F .
\]
If $E$ commutes with $\Psi$ and is aligned with the largest eigendirection, this bound is sharp in order.  In high-stretch regions, where $\lambda_{\max}(\AA)=e^{\lambda_{\max}(\Psi)}$ is large, small log-space oscillations can be amplified into large physical stress-force defects.  The corrected log reconstruction does not remove this physical amplification entirely; rather, it prevents the amplified incompatible part from entering the entropy and work balance uncontrolled.

\section{Convexity along the logarithmic path}
Let
\[
  \Psi(\theta)=\Psi_0+\theta E,\qquad
  \AA(\theta)=\Exp(\Psi(\theta)),\qquad
  g(\theta)=\tr(\AA(\theta))-\tr(\Psi(\theta))-d .
\]
The map $\Psi\mapsto\tr(e^\Psi)$ is convex on symmetric matrices, and $-\tr\Psi-d$ is affine.  Hence $g$ is convex on $[0,1]$.  One direct verification uses the second variation
\[
\begin{aligned}
  \frac{\dd^2}{\dd\theta^2}\tr(e^{\Psi+\theta E})
  &=
  2\int_{0\le s\le r\le 1}
  \tr\!\left(
  e^{(1-r)(\Psi+\theta E)}
  E e^{(r-s)(\Psi+\theta E)}
  E e^{s(\Psi+\theta E)}
  \right)\dd s\,\dd r ,
\end{aligned}
\]
which is nonnegative by cyclicity and symmetry after diagonalization of $\Psi+\theta E$.  The entropy admissible set
\[
  \{\theta\in[0,1]: g(\theta)\le g(0)+\tau_h\}
\]
is therefore an interval containing zero.  The corrected reconstruction chooses its largest element, so it is the least-damping admissible point on the prescribed log path.

\section{Bisection gap and entropy budget}
Let $\theta_\star$ be the maximal admissible parameter.  Bisection maintains an interval
\[
  \theta_L^m\le \theta_\star < \theta_R^m,\qquad
  \theta_R^m-\theta_L^m=2^{-m}.
\]
The returned value $\theta_L^m$ is positive and entropy-compatible.  If $g$ is Lipschitz on $[0,1]$ with constant
\[
  L_g\le
  \sup_{\theta\in[0,1]}
  \left|\tr(D\Exp_{\Psi(\theta)}[E])-\tr E\right|,
\]
then
\[
  0\le g(\theta_\star)-g(\theta_L^m)\le L_g2^{-m}.
\]
Thus the numerical gap between the ideal least-damping parameter and the implemented parameter is controlled by the bisection depth.  In practice the correction should use an absolute entropy tolerance and a roundoff guard:
\[
  g(\theta)\le g(0)+\tau_h+c_{\rm mach}\epsilon_{\rm mach}(1+|g(0)|).
\]

\section{Work-defect decomposition}
Let $\AA_e$ be the tensor entering the entropy variable and let $\AA_m$ be the tensor entering the momentum stress work.  In a coupled step the polymeric work mismatch contains
\[
  \mathcal W_h
  =
  \frac{1-\beta}{\Wi}
  \sum_q w_q(\AA_m-\AA_e):\nabla\uu_h .
\]
The Cauchy--Schwarz and Young inequalities give
\[
  |\mathcal W_h|
  \le
  \frac{\beta}{4}\|\nabla\uu_h\|^2
  +
  \frac{(1-\beta)^2 C_Q^2}{\beta\Wi^2}
  \|\AA_m-\AA_e\|_Q^2 ,
\]
where $C_Q$ is the quadrature-to-volume stability constant.  Exact entropy compatibility corresponds to $\AA_m=\AA_e$, in which case this defect vanishes.  Positive but mismatched reconstructions are stable only up to this computable defect budget, and the budget is amplified by $\Wi^{-2}$ in the energy scaling.

\section{Cellwise correction}
The global correction selects one parameter for all cells, while the cellwise variant selects $\theta_K$ by
\[
  \sum_{q\in K}w_q\Phi(\Exp(\Psi_{0,q}+\theta_K E_q))
  \le
  \sum_{q\in K}w_q\Phi(\Exp\Psi_{0,q})+\tau_K .
\]
If $\sum_K\tau_K\le\tau_h$, summing over cells gives the same global entropy budget.  Cellwise correction is less dissipative because smooth cells can keep $\theta_K=1$ even when a small subset of cells requires damping.  The proof is exactly the interval argument above applied on each cell, followed by summation of the local inequalities.

\section{Asymptotic inactivity}
Assume the raw high-order log reconstruction satisfies
\[
  E_q=\mathcal O(h^{k+1})
\]
and the first variation is entropy-consistent,
\[
  \sum_q w_q\,D\Phi(\Exp\Psi_{0,q})[D\Exp_{\Psi_{0,q}}E_q]
  =\mathcal O(h^{2k+2}).
\]
The second-order expansion of the entropy profile gives
\[
  g(1)-g(0)=\mathcal O(h^{2k+2}).
\]
With a mesh-scaled budget $\tau_h=c_\tau h^{2k+2}$ and $c_\tau$ chosen above the leading consistency constant, the raw reconstruction is accepted for sufficiently small $h$.  In the marginal case, convexity implies
\[
  1-\theta_\star=\mathcal O(h^{k+1})
\]
provided $g'(1)$ is bounded below away from zero on the active branch.  Hence the correction is a high-order admissibility filter, not a first-order limiter.

\section{Conditional high-stretch resolution advantage}
Suppose the logarithmic field $\Psi=\Log\AA$ is smooth while the physical tensor has stretch $\Lambda=\lambda_{\max}(\AA)\gg1$.  A direct physical-space reconstruction of $\AA$ differentiates the exponential map and therefore carries factors of size $\Lambda$ in absolute tensor norms.  A log reconstruction approximates $\Psi$ directly, so its natural error is measured in log-relative or entropy metrics:
\[
  \|\Log\AA_h-\Log\AA\|,\qquad
  \int \Phi(\AA_h)-\Phi(\AA)-D\Phi(\AA):(\AA_h-\AA).
\]
The corrected log method has two advantages under the stated assumptions.  First, it keeps the approximation in the positive cone without clipping eigenvalues.  Second, it prevents the amplified component of a high-order log defect from appearing as a positive entropy increment or uncancelled work defect.  The statement is conditional because it compares regimes in which $\Psi$ is the resolved smooth variable; it is not a claim that log variables remove all high-Weissenberg stress-layer resolution requirements.

\section{Extended scalar and matrix diagnostics}
The figure below records one of the pointwise mechanisms used to motivate the corrected reconstruction.

\begin{figure}[htbp]
\centering
\begin{tikzpicture}
\begin{groupplot}[
  group style={group size=2 by 1,horizontal sep=1.35cm},
  width=.43\textwidth,
  height=.32\textwidth,
  grid=major]
\nextgroupplot[xlabel={amplitude},ylabel={relative mean bias}]
\addplot+[mark=*] table[x=amplitude,y=relative_mean_bias,col sep=comma]{data/scalar_log_bias.csv};
\nextgroupplot[xlabel={amplitude},ylabel={entropy defect}]
\addplot+[mark=*] table[x=amplitude,y=entropy_defect,col sep=comma]{data/scalar_log_bias.csv};
\end{groupplot}
\end{tikzpicture}
\caption{Scalar logarithmic Jensen bias.  Both the mean physical stretch and the elastic entropy increase quadratically with the log perturbation amplitude.}
\label{fig:supp-scalar-bias}
\end{figure}

\section{Extended coupled diagnostics}
The coupled manufactured diagnostic measures the stress-force defect that would enter the momentum equation.  The dynamic high-Weissenberg loop records how a small positive reconstruction defect can accumulate over many steps.

\begin{figure}[htbp]
\centering
\begin{tikzpicture}
\begin{groupplot}[
  group style={group size=2 by 1,horizontal sep=1.35cm},
  width=.43\textwidth,
  height=.32\textwidth,
  grid=major,
  legend style={at={(0.5,1.16)}, anchor=south, font=\scriptsize, draw=none, fill=none, /tikz/every even column/.append style={column sep=0.45em}}]
\nextgroupplot[xlabel={stretch},ylabel={force defect},ymode=log]
\addplot+[mark=*] table[x=stretch,y=log_force_defect,col sep=comma]{data/coupled_flow_diagnostic.csv};
\addlegendentry{raw log}
\addplot+[mark=square*] table[x=stretch,y=sqrt_force_defect,col sep=comma]{data/coupled_flow_diagnostic.csv};
\addlegendentry{square-root}
\addplot+[mark=triangle*] table[x=stretch,y=corrected_force_defect,col sep=comma]{data/coupled_flow_diagnostic.csv};
\addlegendentry{corrected log}
\nextgroupplot[xlabel={stretch},ylabel={entropy defect},ymode=log]
\addplot+[mark=*] table[x=stretch,y=log_entropy_defect,col sep=comma]{data/coupled_flow_diagnostic.csv};
\addplot+[mark=square*] table[x=stretch,y=sqrt_entropy_defect,col sep=comma]{data/coupled_flow_diagnostic.csv};
\addplot+[mark=triangle*] table[x=stretch,y=corrected_entropy_defect,col sep=comma]{data/coupled_flow_diagnostic.csv};
\end{groupplot}
\end{tikzpicture}
\caption{Coupled manufactured reconstruction diagnostic.  Positive raw log and square-root reconstructions both produce stretch-amplified force defects; the corrected log reconstruction removes the entropy-incompatible component.}
\label{fig:supp-coupled-defect}
\end{figure}

\begin{figure}[htbp]
\centering
\begin{tikzpicture}
\begin{groupplot}[
  group style={group size=2 by 1,horizontal sep=1.35cm},
  width=.43\textwidth,
  height=.32\textwidth,
  grid=major,
  legend style={at={(0.5,1.16)}, anchor=south, font=\scriptsize, draw=none, fill=none, /tikz/every even column/.append style={column sep=0.45em}}]
\nextgroupplot[xlabel={$t$},ylabel={entropy excess}]
\addplot+[mark=none] table[x=t,y expr=\thisrow{standard_entropy}-\thisrow{reference_entropy},col sep=comma]{data/dynamic_high_wi_history.csv};
\addlegendentry{raw log}
\addplot+[mark=none,dashed] table[x=t,y expr=\thisrow{corrected_entropy}-\thisrow{reference_entropy},col sep=comma]{data/dynamic_high_wi_history.csv};
\addlegendentry{corrected}
\nextgroupplot[xlabel={$t$},ylabel={force defect},ymode=log]
\addplot+[mark=none] table[x=t,y expr={max(\thisrow{standard_force_defect},0.000000000001)},col sep=comma]{data/dynamic_high_wi_history.csv};
\addplot+[mark=none,dashed] table[x=t,y expr={max(\thisrow{corrected_force_defect},0.000000000001)},col sep=comma]{data/dynamic_high_wi_history.csv};
\end{groupplot}
\end{tikzpicture}
\caption{Dynamic high-Weissenberg reconstruction loop at $\Wi=80$ and background stretch $\Lambda=500$.  The corrected reconstruction suppresses accumulation of the incompatible entropy and force defects.}
\label{fig:supp-dynamic-high-wi}
\end{figure}

\section*{Acknowledgments}

The author acknowledges financial support from the National Natural Science Foundation of China (NSFC, Grant No. 12501602), the Education Department of Hunan Province (Grant No. 24C0055), the Science and Technology Department of Hunan Province (Grant No. 2025JJ60052), and the Scientific Research Start-up Fund of Xiangtan University (Grant No. KZ0810769).

\end{document}